\def\dOi{13(4:11)2017}
\begin{document}

\title[The Universal Process]{The Universal Process}

\author[Y.~Fu]{Yuxi Fu}
\address{BASICS, Department of Computer Science, Shanghai Jiaotong
  University, Shanghai, China}	
\email{fu-yx@cs.sjtu.edu.cn}
\thanks{Work supported by NSFC (61472239, PACE 61261130589).}

\keywords{Theory of interaction, recursion theory, G\"{o}del encoding, universal machine}
\subjclass{[{\bf Theory of computation}]: Models of computation---Concurrency}

\begin{abstract}
\noindent A universal process of a process calculus is one that, given the G\"{o}del index of a process of a certain type, produces a process equivalent to the encoded process.
This paper demonstrates how universal processes can be formally defined and how a universal process of the value-passing calculus can be constructed.
The existence of such a universal process in a process model can be explored to implement higher order communications, security protocols, and programming languages in the process model.
A process version of the S-m-n theorem is stated to showcase how to embed the recursion theory in a process calculus.
\end{abstract}

\maketitle

\section{Introduction}\label{sec-Introduction}

The classic recursion theory~\cite{Rogers1987,Soare1987} is based on two fundamental observations.
The first is that there is an effective function $\phi^{k}$ that enumerates all the $k$-ary recursive functions.
By fixing an enumeration function we can write $\phi^{k}_{i}$ for $\phi^{k}(i)$, the $i$-th $k$-ary recursive function.
The number $i$ is called the {\em G\"{o}del number}, or the {\em G\"{o}del index} of the recursive function.
The effectiveness of $\phi^{k}_{i}$ comes in both directions.
One can effectively calculate a unique number from a given recursive function.
One can also effectively recover a unique recursive function from a given number.
The S-m-n Theorem states that for all $k_{0},k_{1}$ there is a total $(k_{0}{+}1)$-ary recursive function $\textsf{s}_{k_{1}}^{k_{0}}(z,x_{1},\ldots,x_{k_{0}})$ such that
$\phi^{k_{0}+k_{1}}_{k}(i_{1},\ldots,i_{k_{0}},j_{1},\ldots,j_{k_{1}}) \simeq \phi^{k_{1}}_{\textsf{s}_{k_{1}}^{k_{0}}(k,i_{1},\ldots,i_{k_{0}})}(j_{1},\ldots,j_{k_{1}})$ for all numbers $k,i_{1},\ldots,i_{k_{0}},j_{1},\ldots,j_{k_{1}}$.
The equality $\simeq$ means that either both sides are defined and they are equal or neither side is defined.
The second important observation is that there exists a $(k{+}1)$-ary universal function $\mathcal{U}^{k}$ that, upon receiving an index $j$ of a $k$-ary recursive function $\textsf{f}$ and $k$ numbers $i_{1},\ldots,i_{k}$, evaluates $\textsf{f}(i_{1},\ldots,i_{k})$.
In other words, $\mathcal{U}^{k}(j,i_{1},\ldots,i_{k}) \simeq \phi^{k}_{j}(i_{1},\ldots,i_{k})$.
The existence of such a universal function depends crucially on G\"{o}delization.
It is by G\"{o}delization that we can see a number both as a datum and a program.
The S-m-n Theorem and the universal functions are the foundational tools in recursion theory.
The practical counterpart of a universal function is a general purpose computer.
The central idea of the von Neumann structure of such a computer is that of the stored program, which is essentially the same thing as G\"{o}delization.
From the point of view of programming, a universal function is an interpreter that works by interpreting a datum as a program.
Again this is the idea of G\"{o}delization.

Recursion theory plays a foundational role in computation theory and the theory of programming languages.
It makes one think why in the theory of process calculus, or more generally in concurrency theory, the fundamental technique of G\"{o}delization has not been utilized so far.
One possible explanation is that concurrent computations are often distributed.
For processes scattered at different locations the notion of a centralized universal process may sound alien.
In retrospect however, the absence of any universal process has been unfortunate.
The $\pi$-calculus~\cite{MilnerParrowWalker1992}, and CCS~\cite{Milner1989} as well, was proposed with the intention to be the `$\lambda$-calculus' for concurrent computation.
Yet in the theory of process calculus there still lacks a notion comparable to that of decidability.
There are now some interesting techniques that allow one to prove negative results in process calculus~\cite{BusiGabbrielliZavattaro2003,BusiGabbrielliZavattaro2004,Palamidessi2003,GiambiagiSchneiderValencia2004,FuLu2010}.
However they do not offer a method as general as the reduction method in recursion theory.
To develop a theory of solvability or definability for process calculus, the ideas and the techniques of recursion theory are instructive.
In programming theory, there have been quite a few papers on implementing variants of $\pi$, or substantial extensions of them, on current computing platforms.
But there has been little discussion on how to implement a concurrent programming language in the $\pi$-calculus.
It's understandably so since the idea of a universal process (or general interpreter) is indispensable in any such implementation.
If we are serious about the claim that the $\pi$-calculus is to concurrent computation what the $\lambda$-calculus is to functional computation, we should look at implementation issues of concurrent programming languages in the $\pi$-calculus.

The above discussion leads to the conclusion that in both theory and practice there is a genuine need for a process theory that goes beyond the classic recursion theory of function.
The theory of process calculus currently fails to meet that demand.
What can we do to improve the situation?
A natural thing to do is to look at how G\"{o}delization can be carried out in process calculi and how universal processes can be constructed.
G\"{o}delization {\em is} a problem for a process calculus that cannot even code up the natural numbers in a way that supports the interpretations of the computable functions within the calculus.
We need to confine our attention to complete models.
Intuitively a complete process calculus is one that is expressive enough to admit good use of G\"{o}delization.
Now suppose $\mathbb{M}$ is a complete model.
What does a universal process of $\mathbb{M}$ look like?
In the general case it is unlikely that there is a single $\mathbb{M}$-process capable of simulating all $\mathbb{M}$-processes.
A $\pi$-process for example only refers to a finite number of global names.
From the viewpoint of observational equality, there is no way for it to simulate a $\pi$-process that uses strictly more global names.
Our strong notion of equality completely rules out such a scenario.
A universal process of a process calculus should consist of a countable family of processes.
Luckily we seldom need a single all powerful universal process.
In most applications it suffices to have a collection of processes, each acting as a universal process for a set of processes of a certain type.
A type for example could be a finite set of names.
Then a process is of that type if the names it contains all appear in that set.
If we think of it, having to use a restricted version of universal process does not really stop us from deriving any definability/undefinability results in $\mathbb{M}$.
If something is definable in $\mathbb{M}$, it is definable by an $\mathbb{M}$-process of some type.
If it is not definable, it is not defined by any $\mathbb{M}$-process of any type.

We will look at G\"{o}delization and the notion of universal process in $\mathbb{VPC}$, a self-contained version of the value-passing calculus.
The reason to start with this particular model is that it is closer to recursion theory than all the other process calculi~\cite{FuYuxi-2013-VPC}.
The contribution of this paper is the introduction of a formal definition of universal process and the construction of a universal process for $\mathbb{VPC}$.
The significance of the existence of a universal process is emphasized by illustrating a number of applications.
The technique developed in this work is expected to play a key role in the study of process theory and programming theory implemented on process models.

The paper is structured as follows.
Section~\ref{sec-Preliminary} reviews the necessary background on $\mathbb{VPC}$ and the observational theory of processes.
Section~\ref{sec-Universal-Process} provides the formal definition of universal process and demonstrates how to construct a universal process in $\mathbb{VPC}$.
Section~\ref{sec-Application} outlines three major applications of universal process.
Section~\ref{sec-S-m-n-Theorem} formalizes the process version of S-m-n Theorem.
Section~\ref{sec-Future-Work} discusses some future research topics.

Before engaging in the technicalities in the rest of the paper, we should comment on the presentation style of this paper.
We shall not spell out all technical details of our constructions, and consequently nor shall we formally establish the correctness of the constructions.
We will make full use of the fact that $\mathbb{VPC}$ is complete.
This is very much like what recursion theoreticians make use of Church-Turing Thesis since the publication of Post's pioneering paper~\cite{Post1944}.
If one has not built up enough confidence in exploiting the completeness of process models this way, one is advised to consult~\cite{FuLu2010,FuYuxi-2013-VPC,FuYuxi-Nondetrministic-Computation,FuYuxi} in which sufficient technical details can be found.

\section{Preliminary}\label{sec-Preliminary}

In this section we define the semantics of the value-passing calculus, fix the notion of process equality used in this paper, and explain in what sense the value-passing calculus is complete.

\subsection{VPC}\label{VPC}

Value-passing calculi~\cite{Hoare1985,Milner1989,HennessyIng1993a,HennessyIng1993b,HennessyLin1995} have been studied in various contexts.
In most of these studies, the value domains are left open-ended.
A recent work that provides a self-contained account of the value-passing calculi is~\cite{FuYuxi-2013-VPC}.
Since our value-passing calculus is going to be the source model whose programs are to be interpreted by a universal process, an open-ended attitude is inadequate.
At the same time we hope to avoid the formality of~\cite{FuYuxi-2013-VPC} for clarity.
Fortunately there is a standard theory we can refer to.
The value domain of our value-passing calculus is taken to be Presburger Arithmetic~\cite{Presburger1929} (an English translation of the original paper can be found in~\cite{Stansifer1984}).
This is the sub-theory of Peano Arithmetic defined by the constant $0$, the unary function $\textsf{s}$ and the binary function `$+$'.
By overloading notations, we shall abbreviate $\textsf{s}^k(0)$ to $k$ and $\textsf{s}^k(x)$ to $x+k$.
For our purpose the most attractive property of Presburger Arithmetic is the decidability of its first order theory.
There is a terminating procedure that decides the validity of every first order formula of Presburger Arithmetic~\cite{Presburger1929,Monk1976,Enderton2001}.
This is a crucial property if a value-passing calculus is seen as a programming model.
The absence of the multiplication operator does not affect the power of our model since the operator can be implemented in the value-passing calculus~\cite{FuYuxi-2013-VPC}.

Let $\textsf{N}$ be the set $\{0,\textsf{s}(0),\textsf{s}^2(0),\ldots\}$, ranged over by $i,j,k$, and $\textsf{V}$ be the set of natural number variables, ranged over by $x,y,z$.
The set $\textsf{T}$ of {\em value terms}, ranged over by $s,t$, is constructed from the numbers, the variables, and the binary operator `$+$'.
The notation $\textsf{T}^{0}$ stands for the set of closed terms.
The set $\textsf{B}$ of first order {\em logical formulae}, ranged over by $\varphi$, consists of the formulas constructed from the terms, the logical operators $\bot,\top,\wedge,\vee,\Rightarrow,\exists,\forall$ and the binary relations $<,=$.
We write $\vdash\varphi$ if $\varphi$ is a theorem of Presburger Arithmetic.

Let $\mathcal{N}$ be the set of names, ranged over by $a,b,c,d,e,f,g,h$.
The set of the finite $\mathbb{VPC}$-terms is defined by the following BNF:
\begin{eqnarray*}
T &:=& {\bf 0} \mid a(x).T \mid \overline{a}(t).T \mid T\,|\,T \mid (c)T \mid \textit{if}\;\varphi\;\textit{then}\;T .
\end{eqnarray*}
The $\mathbb{VPC}$-processes, denoted by $P,Q$, are the $\mathbb{VPC}$-terms that contain no free variables.
The name $c$ in $(c)T$ is a local name.
A name is global if it is not local.
The semantics of the finite $\mathbb{VPC}$-terms is given by the following labeled transition system, where $\alpha$ ranges over the action set $\{a(i),\overline{a}(i) \mid a\in\mathcal{N},i\in\textsf{N}\}\cup\{\tau\}$.

\vspace*{2mm}

\noindent{\em Action}
\[\begin{array}{cc}
\inference{}{a(x).T\stackrel{a(i)}{\longrightarrow}T\{i/x\}}
\ \ \ & \inference{}{\overline{a}(t).T\stackrel{\overline{a}(i)}{\longrightarrow}T}\ \vdash t=i.
\end{array}\]
\noindent{\em Composition}
\[\begin{array}{cc}
\inference{S\stackrel{\alpha}{\longrightarrow}S'}{S\,|\,T\stackrel{\alpha}{\longrightarrow}S'\,|\,T}
\ \ \  &
\inference{S\stackrel{a(i)}{\longrightarrow}S'\ \ \ \
\
T\stackrel{\overline{a}(i)}{\longrightarrow}T'}{S\,|\,T\stackrel{\tau}{\longrightarrow}S'\,|\,T'}
\end{array}\]
\noindent{\em Localization}
\[\inference{T\stackrel{\alpha}{\longrightarrow}T'}
{(c)T\stackrel{\alpha}{\longrightarrow}(c)T'} \ c\ \mathrm{is}\ \mathrm{not}\ \mathrm{in}\ \alpha.\]
\noindent{\em Condition}
\[\inference{T\stackrel{\alpha}{\longrightarrow}T'}
{\textit{if}\;\varphi\;\textit{then}\;T\stackrel{\alpha}{\longrightarrow}T'} \ \vdash\varphi.\]

\vspace*{1mm}

\noindent
We shall use standard notations like $\Longrightarrow$ and $\stackrel{\alpha}{\Longrightarrow}$.
The recursion mechanism of a value-passing calculus can be defined in a number of ways.
They are not completely equivalent in terms of expressive power~\cite{BusiGabbrielliZavattaro2003,BusiGabbrielliZavattaro2004,Palamidessi2003,GiambiagiSchneiderValencia2004,FuLu2010}.
The infinite behaviors of our model $\mathbb{VPC}$ is introduced by equationally defined terms.
A {\em parametric definition} is given by the equation
\begin{equation}\label{2011-08-21}
D(x_{1},\ldots,x_{k}) = T,
\end{equation}
where $x_{1},\ldots,x_{k}$ are parameter variables.
In this paper we require that $T$ does not contain any free variable not in $\{x_{1},\ldots,x_{k}\}$.
The instantiation of $D(x_{1},\ldots,x_{k})$ at value terms $t_{1},\ldots,t_{k}$, denoted by $D(t_{1},\ldots,t_{k})$, is $T\{t_{1}/x_{1},\ldots,t_{k}/x_{k}\}$.
In addition to the finite terms, $\mathbb{VPC}$ also has instantiated terms of the form $D(t_{1},\ldots,t_{k})$, where $t_{1},\ldots,t_{k}$ are value terms.
The parametric definition (\ref{2011-08-21}) is generally recursive in the sense that $T$ may contain instantiated occurrences of $D(x_{1},\ldots,x_{k})$.
It may also contain instantiated occurrences of some $D'(y_{1},\ldots,y_{j})$ given by another parametric definition.
The operational semantics of $D(t_{1},\ldots,t_{k})$ is defined by the following rule:
\[\inference{T\{t_{1}/x_{1},\ldots,t_{k}/x_{k}\}\stackrel{\alpha}{\longrightarrow}T'}{D(t_{1},\ldots,t_{k})\stackrel{\alpha}{\longrightarrow}T'}\ D(x_{1},\ldots,x_{k}) = T.\]
Now suppose $D(x)=\overline{c}(0)\,|\,(c)(\overline{c}(x)\,|\,\overline{c}(x)\,|\,c(z).D(z{+}1))$.
Then the following reductions are admissible:
\begin{eqnarray*}
D(1) &\stackrel{\tau}{\longrightarrow}& \overline{c}(0)\,|\,(c)(\overline{c}(1)\,|\,D(1+1)) \\
 &\stackrel{\tau}{\longrightarrow}& \overline{c}(0)\,|\,(c)(\overline{c}(1)\,|\,\overline{c}(0)\,|\,(c)(\overline{c}(2)\,|\,D(2{+}1))).
\end{eqnarray*}
In this example the global name $c$ in the component $\overline{c}(0)$ gets captured every time the parametric definition is unfolded.

An alternative to parametric definition is replication.
The syntax for the replication terms is given by
\begin{eqnarray*}
T &:=& \ldots \mid \;!a(x).T \mid \;!\overline{a}(t).T.
\end{eqnarray*}
The operational semantics of the replicator ``$!$'' is defined by the following transitions:
\[\begin{array}{cc}
\inference{}{!a(x).T\stackrel{a(i)}{\longrightarrow}T\{i/x\}\,|\,!a(x).T}
\ \ \ & \inference{}{!\overline{a}(t).T\stackrel{\overline{a}(i)}{\longrightarrow}T\,|\,!\overline{a}(t).T}\ \vdash t=i.
\end{array}\]
We will denote by $\mathbb{VPC}^{!}$ the value-passing calculus with the replicator.

The replicator is a derived operator in $\mathbb{VPC}$.
The terms $!a(x).T$ and $!\overline{a}(t).T$ are equal to the instantiations of the following abstractions respectively.
\begin{eqnarray}
C(x_{1},\ldots,x_{k}) &=& a(x).S\,|\,C(x_{1},\ldots,x_{k}), \label{2016-04-03-a} \\
D(x_{1},\ldots,x_{k}) &=& \overline{a}(t).T\,|\,D(x_{1},\ldots,x_{k}), \label{2016-04-03-b}
\end{eqnarray}
where $\{x_{1},\ldots,x_{k}\}$ is the set of the free variables appearing in $!a(x).T$ respectively $!\overline{a}(t).T$.
We shall freely use the replication operator in $\mathbb{VPC}$.
Leaving aside the question if $\mathbb{VPC}^{!}$ is as expressive as $\mathbb{VPC}$ for the moment, we point out that all recursive functions can be implemented in $\mathbb{VPC}^{!}$~\cite{FuYuxi-2013-VPC}.

The following abbreviations will be used
\begin{eqnarray*}
a.T &\stackrel{\rm def}{=}& a(x).T,\ \ \mathrm{where}\ x\ \mathrm{does}\ \mathrm{not}\ \mathrm{appear}\ \mathrm{in}\ T, \\
\overline{a}.T &\stackrel{\rm def}{=}& \overline{a}(0).T.
\end{eqnarray*}
We occasionally write for example $t(x)$ to indicate that $t$ contains the variable $x$.
Accordingly we write $t(s)$ for the term obtained by substituting $s$ for $x$.
The notations $\varphi(x),\varphi(s)$ and $T(x),T(s)$ are used similarly.
We sometimes use the two leg if command defined as follows:
\begin{eqnarray*}
\textit{if}\;\varphi\;\textit{then}\;S\;\textit{else}\;T &\stackrel{\rm def}{=}& \textit{if}\;\varphi\;\textit{then}\;S \,|\, \textit{if}\;\neg\varphi\;\textit{then}\;T.
\end{eqnarray*}
For clarity we will write
\begin{eqnarray*}
&& {\bf case}\ t\ \mathbf{of} \\
&& \ \ \varphi_{0}(z)\Rightarrow T_{0}(z); \\
&& \ \ \ \ \ \ \ \vdots \\
&& \ \ \varphi_{k-1}(z)\Rightarrow T_{k-1}(z); \\
&& \ \ \varphi_{k}(z)\Rightarrow T_{k}(z) \\
&& {\bf end}\ {\bf case}
\end{eqnarray*}
for the nested if statement $\textit{if}\;\varphi_{0}(t)\;\textit{then}\;T_{0}(t)\;\textit{else}\;\textit{if}\;\ldots\;\textit{else}\;\textit{if}\;\varphi_{k}(t)\;\textit{then}\;T_{k}(t)$.
The auxiliary notation $\textit{let}\;x=t\;\textit{in}\;T$ stands for $T\{t/x\}$.
This is useful when $t$ is a long expression and $x$  occurs in $T$ multiple times.

For a complete treatment of $\mathbb{VPC}$ the reader should consult~\cite{FuYuxi-2013-VPC}.

\subsection{Equality and Expressiveness}\label{Absolute-Equality-and-Expressiveness}

The definition of a universal process must refer to a process equality.
The choice of such an equality is not entirely orthogonal to the existence of a universal process.
It is conceivable that some sort of universal process exists with respect to a weak equality, whereas it is impossible to have a universal process with respect to a stronger equality.
To present our result in its strongest form, we shall introduce a number of properties that we believe best describe the {\em correctness} of our universal processes.
The following account follows the general methodology of~\cite{FuYuxi}.
The description given here is however self-contained.
In this section we assume that $\mathbb{M}$ is a process calculus and $\mathcal{R}$ is a binary relation on the set of $\mathbb{M}$-processes.
The notation $\mathcal{R}^{-1}$ stands for the reverse relation of $\mathcal{R}$.

A universal process is a generalization of a universal Turing machine.
Upon receiving a number the latter simulates the Turing machine encoded by the number.
But how about the correctness of the simulation?
The answer is provided by the operational interpretation of the Church-Turing Thesis~\cite{vanEmdeBoas1990}.
A sound translation of one computation model to another is a bisimulation of computation steps {\em \`{a} la} Milner~\cite{Milner1989} and Park~\cite{Park1981}.
Moreover if we take nondeterministic computation into account the translation ought to be a branching bisimulation of van Glabbeek and Weijland~\cite{vanGlabbeekWeijland1989-first-paper-bb}.
The reader is referred to~\cite{FuYuxi-Nondetrministic-Computation} for a formal study of nondeterministic computation in a process algebraic setting.
\begin{defi}
$\mathcal{R}$ is a {\em bisimulation} if the following clauses are valid whenever $P\mathcal{R}Q$:
\begin{enumerate}
\item If $P\stackrel{\tau}{\longrightarrow}P'$ then one of the
following statements is valid:
\begin{enumerate}[label=(\roman*)]
\item $Q\Longrightarrow Q'$ for some $Q'$ such that $Q'\mathcal{R}^{-1}P$ and $Q'\mathcal{R}^{-1}P'$.

\item $Q\Longrightarrow
Q''\mathcal{R}^{-1}P$ for some $Q''$ such that $Q''\stackrel{\tau}{\longrightarrow}Q'\mathcal{R}^{-1}P'$ for some $Q'$.
\end{enumerate}

\item If $Q\stackrel{\tau}{\longrightarrow}Q'$ then one of the
following statements is valid:
\begin{enumerate}[label=(\roman*)]
\item $P\Longrightarrow P'$ for some $P'$ such that $P'\mathcal{R}Q$ and $P'\mathcal{R}Q'$.

\item $P\Longrightarrow
P''\mathcal{R}Q$ for some $P''$ such that $P''\stackrel{\tau}{\longrightarrow}P'\mathcal{R}Q'$ for some $P'$.
\end{enumerate}
\end{enumerate}
\end{defi}

\noindent A universal process must be sensitive to divergence.
It would be unacceptable to interpret all processes by divergent processes.
The following definition is from~\cite{Priese1978}.
It is the best formalization of the termination preserving property that goes along with the bisimulations~\cite{vanGlabbeekLuttikTrcka2009,FuYuxi}.
\begin{defi}
$\mathcal{R}$ is {\em codivergent} if the following statements are valid whenever $P\mathcal{R}Q$:
\begin{enumerate}
\item
If there is an infinite internal transition sequence $Q\stackrel{\tau}{\longrightarrow}Q_{1}\stackrel{\tau}{\longrightarrow}\ldots\stackrel{\tau}{\longrightarrow}Q_{i}\stackrel{\tau}{\longrightarrow}\ldots$ then $P\stackrel{\tau}{\Longrightarrow}P'\mathcal{R}Q_{k}$ for some $P'$ and some $k\ge1$.
\item
If there is an infinite internal transition sequence $P\stackrel{\tau}{\longrightarrow}P_{1}\stackrel{\tau}{\longrightarrow}\ldots\stackrel{\tau}{\longrightarrow}P_{i}\stackrel{\tau}{\longrightarrow}\ldots$ then $Q\stackrel{\tau}{\Longrightarrow}Q'\mathcal{R}^{-1}P_{k}$ for some $Q'$ and some $k\ge1$.
\end{enumerate}
\end{defi}

\noindent A universal process should also respect interactability.
The barbedness of Milner and Sangiorgi~\cite{MilnerSangiorgi1992} poses a minimal condition.
We say that a process $P$ is observable, notation $P{\Downarrow}$, if $\Longrightarrow\stackrel{\alpha}{\longrightarrow}$ for some $\alpha\ne\tau$.
It is unobservable, notation $P{\not\Downarrow}$, if it is not observable.
\begin{defi}
$\mathcal{R}$ is {\em equipollent} if $P{\Downarrow}\Leftrightarrow Q{\Downarrow}$ whenever $P\mathcal{R}Q$.
\end{defi}
The equipollence condition does not make much sense unless some kind of closure property is available.
Concurrent composition and localization are the most fundamental operators in concurrency theory.
The former makes global interaction possible while the latter localizes such possibility.
Hence the following definition.
\begin{defi}
$\mathcal{R}$ is {\em extensional} if the following statements are valid:
\begin{enumerate}
\item If $P\mathcal{R}Q$ then $(c)P\mathcal{R}(c)Q$ for every $c\in\mathcal{N}$;
\item If $P_{0}\mathcal{R}Q_{0}$ and $P_{1}\mathcal{R}Q_{1}$ then $(P_{0}\,|\,P_{1})\mathcal{R}(Q_{0}\,|\,Q_{1})$.
\end{enumerate}
\end{defi}

In~\cite{FuYuxi} it is argued that these properties give a model-independent characterization of process equality.
\begin{defi}\label{absolute-equality}
The absolute equality $=_{\mathbb{M}}$ is the largest relation on the set of the $\mathbb{M}$-processes that satisfies the following:
\begin{enumerate}
\item It is reflexive;

\item \label{2016-03-31} It is extensional, equipollent, codivergent and bisimilar.
\end{enumerate}
\end{defi}
It is easy to convince oneself that $=_{\mathbb{M}}$ is well defined.
So we have $=_{\mathbb{VPC}}$ and $=_{\mathbb{VPC}^{!}}$.
We will often omit the subscript.

The abstract definition of $=_{\mathbb{M}}$ makes it difficult to work with.
We need a characterization of $=_{\mathbb{M}}$ that relies neither on the equipollence condition nor on the extensionality condition.
In practice it is sufficient to have an external bisimilarity $\simeq_{\mathbb{M}}$ satisfying $\simeq_{\mathbb{M}}\;\subseteq\;=_{\mathbb{M}}$.
\begin{defi}\label{external-bi}
A codivergent bisimulation $\mathcal{R}$ is an $\mathbb{M}$-bisimulation if the following statements are valid whenever $P\mathcal{R}Q$ and $\alpha\ne\tau$:
\begin{enumerate}
\item If $P\stackrel{\alpha}{\longrightarrow}P'$ then
$Q\Longrightarrow Q''\stackrel{\alpha}{\longrightarrow}Q'\mathcal{R}^{-1}P'$ and $P\mathcal{R}Q''$ for some $Q',Q''$.
\item If $Q\stackrel{\alpha}{\longrightarrow}Q'$ then
$P\Longrightarrow P''\stackrel{\alpha}{\longrightarrow}P'\mathcal{R}Q'$ and $P''\mathcal{R}Q$ for some $P',P''$.
\end{enumerate}
The $\mathbb{M}$-bisimilarity $\simeq_{\mathbb{M}}$ is the largest $\mathbb{M}$-bisimulation.
\end{defi}
Both $\simeq_{\mathbb{VPC}}\;\subseteq\;=_{\mathbb{VPC}}$ and $\simeq_{\mathbb{VPC}^{!}}\;\subseteq\;=_{\mathbb{VPC}^{!}}$ hold.
We shall make use of these facts in the correctness proofs.

By making use of the congruence $=$ we can define semantically the one step {\em deterministic} computation $P\rightarrow P'$ as an internal action $P\stackrel{\tau}{\longrightarrow}P'$ such that $P'=P$, and the one step {\em nondeterministic} computation $P\stackrel{\iota}{\longrightarrow}P'$ as an internal action $P\stackrel{\tau}{\longrightarrow}P'$ such that $P'\ne P$. The rich structure of nondeterministic computation is discussed in~\cite{FuYuxi-Nondetrministic-Computation}.
The distinction between the two classes of internal actions is important to appreciate the working mechanism of the universal process.
The reflexive and transitive closure of $\rightarrow$ is denoted by $\rightarrow^{*}$.

If we consider interpreter rather than universal process, we need to relate a process of one model to a process of another model.
In other words we need to talk about `equality' between the processes from two different process calculi.
This way of looking at the expressiveness relationship between two process calculi leads immediately to the notion of subbisimilarity~\cite{FuYuxi}.
The reflexivity of Definition~\ref{absolute-equality} is turned into totality and soundness.
Totality means that for each $\mathbb{M}$-process $P$ there is an $\mathbb{N}$-process $Q$ such that $P\propto Q$.
Soundness is the condition that $Q=_{\mathbb{N}}Q'$ whenever $P=_{\mathbb{M}}P'$, $P\propto Q$ and $P'\propto Q'$.
\begin{defi}\label{subbisimilarity}
A binary relation $\propto$ from the set of $\mathbb{M}$-processes to the set of $\mathbb{N}$-processes is a {\em subbisimilarity} if it renders true the following statements:
\begin{enumerate}
\item $\propto$ is total and sound;
\item \label{2016-04-01} $\propto$ is extensional, equipollent, codivergent and bisimilar.
\end{enumerate}
$\mathbb{M}$ is {\em subbisimilar to} $\mathbb{N}$, notation $\mathbb{M}\sqsubseteq\mathbb{N}$ or $\mathbb{N}\sqsupseteq\mathbb{M}$, if there is a subbisimilarity, a witness of $\mathbb{M}\sqsubseteq\mathbb{N}$, from $\mathbb{M}$ to $\mathbb{N}$.
\end{defi}

Intuitively $\mathbb{M}\sqsubseteq\mathbb{N}$ means that $\mathbb{N}$ is at least as expressive as $\mathbb{M}$.
The existence of a total relation $\propto$ from $\mathbb{M}$ to $\mathbb{N}$ means that every $\mathbb{M}$-process $P$ gets interpreted by some $\mathbb{N}$-process $Q$.
The condition (\ref{2016-04-01}) of Definition~\ref{subbisimilarity}, which is the same as the condition (\ref{2016-03-31}) of Definition~\ref{absolute-equality}, guarantees that $P$ and $Q$ behave the same in respective models.
Furthermore the soundness allows one to see $P$ and $Q$ as equal as it were.
The relation $\sqsubseteq$ is stronger than most of the expressiveness relations discussed in the literature~\cite{BusiGabbrielliZavattaro2003,BusiGabbrielliZavattaro2004,Gorla2008Concur,Palamidessi2003,FuYuxi}, which makes the correctness of our interpreter more convincing.

\subsection{Completeness}\label{sec-Completeness}

Both $\mathbb{VPC}$ and $\mathbb{VPC}^{!}$ are Turing complete.
There are several interpretations of Turing completeness in the literature on process calculus~\cite{BusiGabbrielliZavattaro2004,MaffeisPhillips2005,FuLu2010}.
The general requirement on Turing completeness of a process model $\mathbb{M}$ can be summarized as follows:
\begin{itemize}
\item There is an encoding $\llbracket\_\rrbracket$ of the natural numbers in $\mathbb{M}$.

\item There is an interpretation $\llbracket\_\rrbracket$ of recursive functions~\cite{Rogers1987} in $\mathbb{M}$ such that for every $k$-ary computable function $\textsf{f}(x_{1},\ldots,x_{k})$ and all numbers $i_{1},\ldots,i_{k}$ the following operational property holds:
    If $\textsf{f}(i_{1},\ldots,i_{k})$ is defined then
    \[\llbracket i_{1}\rrbracket\,|\,\ldots\,|\,\llbracket i_{k}\rrbracket\,|\,\llbracket \textsf{f}(x_{1},\ldots,x_{k})\rrbracket\stackrel{\tau}{\Longrightarrow}\approx\llbracket \textsf{f}(i_{1},\ldots,i_{k})\rrbracket,\]
\end{itemize}
where $\stackrel{\tau}{\Longrightarrow}$ is the transitive closure of $\stackrel{\tau}{\longrightarrow}$;
if $\textsf{f}(i_{1},\ldots,i_{k})$ is undefined then
    \[\llbracket i_{1}\rrbracket\,|\,\ldots\,|\,\llbracket i_{k}\rrbracket\,|\,\llbracket \textsf{f}(x_{1},\ldots,x_{k})\rrbracket\stackrel{\tau}{\Longrightarrow}\approx\Omega,\]
where $\Omega$ is a divergent process whose only action is $\Omega\stackrel{\tau}{\longrightarrow}\Omega$, and $\approx$ is one of the termination preserving weak equalities.
A criticism to this level of completeness is that the input numbers $\llbracket i_{1}\rrbracket,\ldots,\llbracket i_{k}\rrbracket$ are not necessarily picked up properly by $\llbracket \textsf{f}(x_{1},\ldots,x_{k})\rrbracket$, and the result number $\llbracket \textsf{f}(x_{1},\ldots,x_{k})\rrbracket$ is not sent to any intended target.
The evolution from $\llbracket i_{1}\rrbracket\,|\,\ldots\,|\,\llbracket i_{k}\rrbracket\,|\,\llbracket \textsf{f}(x_{1},\ldots,x_{k})\rrbracket$ to $\llbracket \textsf{f}(i_{1},\ldots,i_{k})\rrbracket$ could be too liberal.

The Turing completeness of an interaction model $\mathbb{M}$ means that an {\em outsider} can see that the recursive functions can be coded up using $\mathbb{M}$-processes.
It is an external completeness.
A stronger notion of completeness, a much more useful one in practice, is internal completeness.
Intuitively the internal completeness of $\mathbb{M}$ means that the {\em insiders}, the $\mathbb{M}$-processes, are aware of the fact that they can compute all the computable functions.
A formal treatment of this kind of completeness is provided in~\cite{FuYuxi}.
In this paper it suffices to say that the completeness of $\mathbb{M}$ boils down to the following:
\begin{itemize}
\item For each name $a$ and each number $i$ there is an encoding $\llbracket i\rrbracket_{a}$ of $i$ at $a$.

\item For all $k\ge0$ and all names $a_{1},\ldots,a_{k},b$, there is an encoding function $\llbracket\_\rrbracket_{a_{1},\ldots,a_{k}}^{b}$ such that
for every $k$-ary recursive function $\textsf{f}(x_{1},\ldots,x_{k})$ the following statement is valid:
For all natural numbers $i_{1},\ldots, i_{k}$, $\textsf{f}( i_{1},\ldots, i_{k})$ is defined if and only if
\[\llbracket i_{1}\rrbracket_{a_{1}}\,|\,\ldots\,|\,\llbracket i_{k}\rrbracket_{a_{k}}\,|\,\llbracket\textsf{f}(x_{1},\ldots,x_{k})\rrbracket_{a_{1},\ldots,a_{k}}^{b}
\;\underset{k\ \mathrm{times}}{\underbrace{\stackrel{\iota}{\longrightarrow}\ldots\stackrel{\iota}{\longrightarrow}}} =_{\mathbb{M}} \llbracket\textsf{f}( i_{1},\ldots, i_{k})\rrbracket_{b},\]
and
$\textsf{f}( i_{1},\ldots, i_{k})$ is undefined if and only if
\[\llbracket i_{1}\rrbracket_{a_{1}}\,|\,\ldots\,|\,\llbracket i_{k}\rrbracket_{a_{k}}\,|\,\llbracket\textsf{f}(x_{1},\ldots,x_{k})\rrbracket_{a_{1},\ldots,a_{k}}^{b}
\;\underset{k\ \mathrm{times}}{\underbrace{\stackrel{\iota}{\longrightarrow}\ldots\stackrel{\iota}{\longrightarrow}}} =_{\mathbb{M}} \Omega.\]
\end{itemize}
The class $\{\llbracket i\rrbracket_{a}\}_{i\in\textsf{N},a\in\mathcal{N}}$ provides an encoding of the natural numbers in $\mathbb{M}$.
The process $\llbracket i\rrbracket_{a}$ is ready to deliver the number $i$ to a process at channel $a$.
The process $\llbracket\textsf{f}(x_{1},\ldots,x_{k})\rrbracket_{a_{1},\ldots,a_{k}}^{b}$ inputs the numbers $i_{1},\ldots,i_{k}$ sequentially at $a_{1},\ldots,a_{k}$ in $k$ steps, after which it becomes some process, say $M$, equal to $\llbracket\textsf{f}( i_{1},\ldots, i_{k})\rrbracket_{b}$.
The only action of $\llbracket\textsf{f}(i_{1},\ldots, i_{k})\rrbracket_{b}$ is to deliver the result to whichever process wants a number at channel $b$.
It should be remarked that $M$ may perform a finite sequence of deterministic computations to simulate the computation of $\textsf{f}( i_{1},\ldots, i_{k})$.
All the intermediate states of this simulation are equal to each other.

Both $\mathbb{VPC}$ and $\mathbb{VPC}^{!}$ are complete in the above sense~\cite{FuYuxi-2013-VPC,FuYuxi}.
In both models the encoding $\llbracket i\rrbracket_{a}$ is defined by $\overline{a}(i)$.

The practical implication of the completeness of $\mathbb{VPC}$ and $\mathbb{VPC}^{!}$ is that we may make use of a process without explicitly defining it.
Let's explain this point by examples.
Suppose $\textsf{f}(x_{1},\ldots,x_{k_{0}})$, $\textsf{g}(y_{1},\ldots,y_{k_{1}})$ are computable functions.
Then we may assume that
\[\textit{if}\;\textsf{f}(x_{1},\ldots,x_{k_{0}})=\textsf{g}(y_{1},\ldots,y_{k_{1}})\;\textit{then}\;T\]
is a $\mathbb{VPC}$-term with the obvious semantics.
In fact it can be defined by the following term
\[(c)(d)(\llbracket\textsf{f}(x_{1},\ldots,x_{k_{0}})\rrbracket^{c}\,|\,\llbracket\textsf{g}(y_{1},\ldots,y_{k_{1}})\rrbracket^{d}\,|\,c(x).d(y).\textit{if}\;x=y\;\textit{then}\;T),\]
where $\llbracket\textsf{f}(x_{1},\ldots,x_{k_{0}})\rrbracket^{c}$ is the $\mathbb{VPC}$-process that outputs the value of $\textsf{f}(x_{1},\ldots,x_{k_{0}})$ at $c$, whose existence is guaranteed by the completeness of $\mathbb{VPC}$.
The process $\llbracket\textsf{g}(y_{1},\ldots,y_{k_{1}})\rrbracket^{d}$ is similar.
In the same fashion we may think of $\overline{a}(\textsf{f}(x_{1},\ldots,x_{k})).T$ as the $\mathbb{VPC}$-term
$(c)(\llbracket\textsf{f}(x_{1},\ldots,x_{k})\rrbracket^{c}\,|\,c(z).\overline{a}(z).T)$.
More generally let $\psi(x_{1},\ldots,x_{k})$ be a semi-decidable property and $\chi_{\psi}(x_{1},\ldots,x_{k})$ be the partial characteristic function of $\psi$.
According to definition $\chi_{\psi}(x_{1},\ldots,x_{k})$ is the recursive function that returns `$1$' at input sequence $i_{1},\ldots,i_{k}$ when the property holds and diverges otherwise.
Now we may regard $\textit{if}\;\psi\;\textit{then}\;T$ the same as $\textit{if}\;\chi_{\psi}=1\;\textit{then}\;T$.
If $\psi$ is a decidable property, then $\textit{if}\;\psi\;\textit{then}\;T\;\textit{else}\;T'$ can be interpreted as \[\textit{if}\;\chi_{\psi}=1\;\textit{then}\;T\,|\,\textit{if}\;\chi_{\psi}\ne1\;\textit{then}\;T'.\]
To simplify notation we shall use more liberal terms like
\begin{equation}\label{2011-09-02}
A(j).\textit{if}\;j=1\;\textit{then}\;T(j).
\end{equation}
In the above term the generalized prefix operation $A(j)$ is understood as an arithmetic operation.
After the result $j$ has been calculated, the process $\textit{if}\;j=1\;\textit{then}\;T(j)$ is ready to fire.
By the completeness the process in (\ref{2011-09-02}) can be implemented in $\mathbb{VPC}$.

In the rest of the paper we shall use the internal completeness of $\mathbb{VPC}$ extensively in the manner just described.

For a systematic development of the equality theory, the expressiveness theory and the completeness theory from which the definitions given in Section~\ref{Absolute-Equality-and-Expressiveness} and Section~\ref{sec-Completeness} are imported, the reader is referred to~\cite{FuYuxi}.
The completeness of $\mathbb{VPC}$ is formally established in~\cite{FuYuxi-2013-VPC}.

\section{Universal Process}\label{sec-Universal-Process}

This section presents the major contribution of the paper, which is to construct a universal process for $\mathbb{VPC}$.
Since a universal process is a special case of an interpreter, we will first give a formal definition of the latter.
We then define an interpreter of $\mathbb{VPC}^{!}$ in $\mathbb{VPC}$.
Finally we modify the definition of the interpreter to produce the desired universal process of $\mathbb{VPC}$.
We hope that this two step construction offers a clearer presentation of our methodology.

Suppose $\mathbb{L},\mathbb{M}$ are complete models.
We intend to formalize the relationship saying that $\mathbb{M}$ is capable of interpreting all the $\mathbb{L}$-processes {\em within} $\mathbb{M}$.
Informally an {\em interpreter} of $\mathbb{L}$ in $\mathbb{M}$ is an $\mathbb{M}$-process such that after inputting a G\"{o}del number of an $\mathbb{L}$-process it behaves like the $\mathbb{L}$-process represented by the number.
A prerequisite for the existence of such an interpreter is that $\mathbb{M}$ should be at least as expressive as $\mathbb{L}$.
This is because if $\mathbb{L}\not\sqsubseteq\mathbb{M}$ then there is an $\mathbb{L}$-process whose interactive behavior cannot be simulated by any $\mathbb{M}$-processes.
When this is the case there cannot be any interpreter of $\mathbb{L}$ in $\mathbb{M}$.
We conclude that every interpreter of $\mathbb{L}$ in $\mathbb{M}$ is based on an expressiveness relation $\propto$ from $\mathbb{L}$ to $\mathbb{M}$.

What is expected of an interpreter?
There is no point for it to simulate a term containing free variables.
But it is expected to be able to manipulate bound variables since they only act as placeholders.
An interpreter can deal with a finite number of global names.
But no interpreter can store an infinite number of global names.
The issue concerning local names is more tricky.
Different models have different naming policies.
Some models admit dynamic creation of local names, others do not.
So in general an interpreter must know the number of distinct local names appearing in a process in order to simulate it properly.
Talking about the number of distinct names, we would like to emphasize that $(b)(\overline{a}b.\overline{b}\,|\,(b)(c)\overline{a}b.\overline{a}c.\overline{b}.\overline{c})$ contains two, not three, distinct local names, although semantically there are three local names.
This static view is important for G\"{o}delization.
Let $\mathcal{N}^{*}$ be the set of finite lists of names, ranged over by $\jmath$.
The notation $a\in\jmath$ means that $a$ appears in $\jmath$.
We have the following description of an interpreter:
\begin{quote}
An interpreter of $\mathbb{L}$ in $\mathbb{M}$ is a family $\{\mathcal{I}_{c}^{i,\jmath}\}_{i\in\textsf{N},\jmath\in\mathcal{N}^{*},c\notin\jmath}$ of $\mathbb{M}$-processes such that, for all $i\in\textsf{N}$, $\jmath\in\mathcal{N}^{*}$ and $c\in\mathcal{N}$ such that $c\notin\jmath$, after picking up a G\"{o}del number at channel $c$ the process $\mathcal{I}_{c}^{i,\jmath}$ can simulate all $\mathbb{L}$-processes that have at most $i$ distinct local names and contain no more global names than those appearing in $\jmath$.
\end{quote}
We will write $\mathcal{I}_{c}^{i,a_{1}\ldots a_{k}}$ if $\jmath$ is the list $a_{1}\ldots a_{k}$.
The superscript $i$ is often omitted.
The interpretation of a number by $\mathcal{I}_{c}^{i,\jmath}$ differs from the interpretation of the same number by $\mathcal{I}_{c}^{i,\jmath'}$ in that they have different interfaces. However $\mathcal{I}_{c}^{i,\jmath}$ and $\mathcal{I}_{c}^{i',\jmath}$ may produce the same interpretation of a number if the number encodes a process that has at most $\min\{i,i'\}$ local names.

Now suppose $k$ is the G\"{o}del index of an $\mathbb{L}$-process $P$ whose set of global names is a subset of $\{a_{1},\ldots,a_{j}\}$ and whose number of local names is no more than $i$.
Let $\{\llbracket\_\rrbracket_{c}\}_{c\in\mathcal{N}}$ be an indexed encoding function of the natural numbers in $\mathbb{M}$.
The process $\mathcal{I}_{c}^{i,a_{1}\ldots a_{j}}$ must satisfy the following property: There exists a unique $Q$ such that
\begin{equation}\label{2011-06-14}
\llbracket k\rrbracket_{c}\,|\,\mathcal{I}_{c}^{i,a_{1}\ldots a_{j}}\stackrel{\iota}{\longrightarrow}\;Q\propto^{-1}P,
\end{equation}
where $\propto$ is the subbisimilarity the interpretation is based upon and $\propto^{-1}$ is the inverse relation of $\propto$.
After a single step interaction with $\llbracket k\rrbracket_{c}$ the process $\mathcal{I}_{c}^{i,a_{1}\ldots a_{j}}$ becomes an $\mathbb{M}$-version of $P$ under $\propto$.
Since there may be many subbisimilarities from $\mathbb{L}$ to $\mathbb{M}$ and possibly infinite number of encodings of the natural numbers into $\mathbb{M}$, it is more precise to define an interpreter of $\mathbb{L}$ in $\mathbb{M}$ as the tuple $\langle\{\mathcal{I}_{c}^{i,\jmath}\}_{i\in\textsf{N},\jmath\in\mathcal{N}^{*},c\notin\jmath},\llbracket \_\rrbracket,\propto\rangle$ that satisfies (\ref{2011-06-14}).
This completes the formal definition of interpreter.

Let's write $\mathbb{L}\in\mathbb{M}$ if there is an interpreter of $\mathbb{L}$ in $\mathbb{M}$.
We may think of $\mathbb{L}\in\mathbb{M}$ as an internal version, or a programming version, of $\mathbb{L}\sqsubseteq\mathbb{M}$.
In the terminology of programming language, $\mathbb{L}\in\mathbb{M}$ says that $\mathbb{L}$ can be implemented in $\mathbb{M}$.

The distinction between a translation and an implementation should now be clear.
A translation is a reduction from a source model to a target model.
It is a meta theoretical operation.
An implementation is a family of processes in the target model that is capable of reproducing a process of the source model at will.

\subsection{G\"{o}del Index}\label{sec-Godel-Index}

\begin{figure}[t]
\begin{center}
\begin{tabular}{rcl}
$\llbracket{\bf 0}\rrbracket_{\mathfrak{i}}$ &$\;\stackrel{\rm def}{=}\;$& $0$, \\
$\llbracket a(x).T\rrbracket_{\mathfrak{i}}$ &$\;\stackrel{\rm def}{=}\;$& $7*\langle \varsigma(a),\varsigma(x),\llbracket T\rrbracket_{\mathfrak{i}}\rangle+1$, \\
$\llbracket \overline{a}(t).T\rrbracket_{\mathfrak{i}}$ &$\;\stackrel{\rm def}{=}\;$& $7*\langle \varsigma(a),\llbracket t\rrbracket_{\varsigma},\llbracket T\rrbracket_{\mathfrak{i}}\rangle+2$, \\
$\llbracket T\,|\,T'\rrbracket_{\mathfrak{i}}$ &$\;\stackrel{\rm def}{=}\;$& $7*\langle \llbracket T\rrbracket_{\mathfrak{i}},\llbracket T'\rrbracket_{\mathfrak{i}}\rangle+3$, \\
$\llbracket (c)T\rrbracket_{\mathfrak{i}}$ &$\;\stackrel{\rm def}{=}\;$& $7*\langle \varsigma(c),\llbracket T\rrbracket_{\mathfrak{i}}\rangle+4$, \\
$\llbracket \textit{if}\;\varphi\;\textit{then}\;T\rrbracket_{\mathfrak{i}}$ &$\;\stackrel{\rm def}{=}\;$& $7*\langle \llbracket\varphi\rrbracket_{\varsigma},\llbracket T\rrbracket_{\mathfrak{i}}\rangle+5$, \\
$\llbracket !a(x).T\rrbracket_{\mathfrak{i}}$ &$\;\stackrel{\rm def}{=}\;$& $7*\langle \varsigma(a),\varsigma(x),\llbracket T\rrbracket_{\mathfrak{i}}\rangle+6$, \\
$\llbracket !\overline{a}(t).T\rrbracket_{\mathfrak{i}}$ &$\;\stackrel{\rm def}{=}\;$& $7*\langle \varsigma(a),\llbracket t\rrbracket_{\varsigma},\llbracket T\rrbracket_{\mathfrak{i}}\rangle+7$. \\
\end{tabular}
  \caption{G\"{o}del Index of $\mathbb{VPC}^{!}$-Term.}
  \label{G-Index}
\end{center}
\end{figure}

We avail ourselves of an effective bijective function \[\langle\_,\ldots,\_\rangle_{k}:\underset{k\ \mathrm{times}}{\underbrace{\textsf{N}\times\ldots\times\textsf{N}}}\rightarrow\textsf{N},\] whose inverse function is composed of the unary functions $(\_)_{0},\ldots,(\_)_{k-1}$.
For clarity we sometimes write for instance $z_{i,j}$ for $((z)_{i})_{j}$ when no confusion arises.
We assume that $\langle0,0,\ldots,0\rangle_{k}=0$, and we often omit the subscript in $\langle\_,\ldots,\_\rangle_{k}$.
For convenience we assume that the unary pairing function is the identity function and the 0-ary pairing function is the constant $0$.

By abusing notations, let $\varsigma$ denote both a bijective function from the set $\mathcal{N}$ of names to $\textsf{N}$ and a bijective function from the set $\textsf{V}$ of variables to $\textsf{N}$.
Using $\varsigma$ as an oracle function, we can define an effective bijective function from the set $\textsf{T}$ of terms to $\textsf{N}$ and an effective bijective function from the set $\textsf{B}$ of formulas to $\textsf{N}$.
We will denote both by $\llbracket\_\rrbracket_{\varsigma}$ and omit the obvious structural definition.

Using the standard technique the G\"{o}del number of a $\mathbb{VPC}^{!}$-term is defined by the function $\llbracket\_\rrbracket_{\mathfrak{i}}$ introduced in Figure~\ref{G-Index}.
The function $\llbracket\_\rrbracket_{\mathfrak{i}}$ is a {\em bijection} between the set of the $\mathbb{VPC}^{!}$-terms and the set of the natural numbers.
It should be emphasized that we prohibit the use of $\alpha$-conversion when we are assigning G\"{o}del numbers to the $\mathbb{VPC}^{!}$-terms.
The encodings of say $a(x).a(x).\overline{b}(x)$ and $a(x).a(y).\overline{b}(y)$ are different, even though they are treated as syntactically the same term when $\alpha$-conversion is admitted.

We get another set of G\"{o}del indices if we use an oracle function different from $\varsigma$.
The definition of our interpreter does not depend on any particular choice of such a function.
For a $\mathbb{VPC}^{!}$-process $P$ using $k$ global names $a_{1},\ldots,a_{k}$ and $i$ local names, a {\em normal index} of $P$ is the one in which the global names $a_{1},\ldots,a_{k}$ are indexed by $1,2,\ldots,k$ and the local names are indexed by $k+1,k+2,\ldots,k+i$.

\subsection{A VPC Interpreter}

Having fixed the G\"{o}del index for $\mathbb{VPC}^{!}$, we are ready to define an interpreter of $\mathbb{VPC}^{!}$ in $\mathbb{VPC}$.
The new problem with such an interpretation is how to simulate interactions properly.
The proofs of Turing equivalence in computation theory literature boil down to establishing effective codivergent bisimulation relations from source models to target models.
We extend this approach to encodings that take care of equipollence and extensionality.
A major difference between an encoding of one computation model into another and an encoding for interaction models is that the former is a translator whereas the latter is an interpreter.
Since a process normally receives messages from environment in its life cycle, and the behaviour of the process after an input action depends on the information content of received message, the translation of a component of the process cannot be fixed in a static fashion.
It has to be done during execution, hence the on-the-fly style of interpretation.

The interpreter is a family of processes $\{\mathcal{I}^{i,\jmath}_{d}\}_{i\in\textsf{N},\jmath\in\mathcal{N}^{*},d\notin\jmath}$.
The definition of $\mathcal{I}_{d}^{i,a_{1}\ldots a_{k}}$ is given by
\begin{eqnarray}\label{2011-06-27}
\mathcal{I}^{i,a_{1}\ldots a_{k}}_{d} &=& d(z).(h)(\mathcal{P}_{\mathfrak{i}}(z)\,|\,h(z).\mathcal{S}_{\mathfrak{i}}(z)).
\end{eqnarray}
The interpreter executes the two subroutines sequentially.
\begin{itemize}
\item If $z$ is the G\"{o}del number of a process, say $A$, whose global names are among $a_{1}\ldots a_{k}$ and the number of the local names appearing in it is bounded by $i$, the {\em parser} $\mathcal{P}_{\mathfrak{i}}(z)$ transforms $z$ to a normal G\"{o}del index $z'$ that codes up a process $\alpha$-convertible to $A$, and then releases $z'$ through the channel $h$.
If $z$ is illegitimate the parser aborts the interpretation.
In other words $\mathcal{I}_{d}^{i,a_{1}\ldots a_{k}}$ chooses to interpret the index of a process of a wrong type as an index for ${\bf 0}$.

\item The {\em simulator} $\mathcal{S}_{\mathfrak{i}}(z')$ operates on the received normal G\"{o}del number and simulates the process indexed by the number.
\end{itemize}
The interpretation makes use of the following recursive functions $\textsf{r}_{7},\textsf{d}_{7}$:
\begin{eqnarray*}\label{2011-06-17}
\textsf{r}_{7}(z) &\stackrel{\rm def}{=}& \left\{\begin{array}{ll}
0, & \ \mathrm{if}\ z=0, \\
i, & \ \mathrm{if}\ 1\le i\le7\ \mathrm{and}\ \exists j.z=7*j+i, \\
\end{array}\right. \\
\textsf{d}_{7}(z) &\stackrel{\rm def}{=}& \left\{\begin{array}{ll}
0, & \ \mathrm{if}\ z=0, \\
j, & \ \mathrm{if}\ \exists i{\in}\{1,\ldots,7\}.z=7*j+i. \\
\end{array}\right.
\end{eqnarray*}
The operation carried out by the parser is purely arithmetical.
So it can be implemented in $\mathbb{VPC}$.
Let's however take a look at an outline of the following implementation of $\mathcal{P}_{\mathfrak{i}}(z)$:
\begin{equation}\label{2016-04-04}
(g_{1}\ldots g_{k+i})(c)(e)(\prod_{j=1}^{k+i}\overline{g_{j}}(0)\,|\,\overline{c}(1)\,|\,\mathcal{G}_{\mathfrak{i}}(z,0)\,|\,e.\mathcal{N}_{\mathfrak{i}}(z,0)),
\end{equation}
where $\prod_{j=1}^{k+i}\overline{g_{j}}(0)$ stands for $\overline{g_{1}}(0)\,|\,\ldots\,|\,\overline{g_{k+i}}(0)$.
The {\em grammar checker} $\mathcal{G}_{\mathfrak{i}}(z,v)$ is defined in Fig.~\ref{G4VPC}.
\begin{figure}[t]
\begin{center}
\begin{tabular}{l}
 ${\bf case}\ z\ {\bf of}$ \\
 $\ \ \textsf{r}_{7}(z){=}0 \;\Rightarrow\; c(x).(\textit{if}\;x=1\;\textit{then}\;\overline{e}\;\textit{else}\;\overline{c}(x-1))$; \\
 $\ \ \textsf{r}_{7}(z){=}1 \;\Rightarrow\; \textit{if}\;\ L_1(\textsf{d}_{7}(z)_{0})\;\textit{then}\;\mathcal{G}_{\mathfrak{i}}(\textsf{d}_{7}(z)_{2},\langle\textsf{d}_{7}(z)_{1}+1,v\rangle)$; \\
 $\ \ \textsf{r}_{7}(z){=}2 \;\Rightarrow\; \textit{if}\;L_2^1(\textsf{d}_{7}(z)_{1})\;\textit{then}\; \textit{if}\;L_2^0(\textsf{d}_{7}(z)_{0})\;\textit{then}\;\mathcal{G}_{\mathfrak{i}}(\textsf{d}_{7}(z)_{2},v)$; \\
 $\ \ \textsf{r}_{7}(z){=}3 \;\Rightarrow\; c(x).(\overline{c}(x+1)\,|\,\mathcal{G}_{\mathfrak{i}}(\textsf{d}_{7}(z)_{0},v)\,|\,\mathcal{G}_{\mathfrak{i}}(\textsf{d}_{7}(z)_{1},v))$; \\
 $\ \ \textsf{r}_{7}(z){=}4 \;\Rightarrow\; \textit{if}\;\ L_4(\textsf{d}_{7}(z)_{0})\;\textit{then}\;\mathcal{G}_{\mathfrak{i}}(\textsf{d}_{7}(z)_{1},v)$; \\
 $\ \ \textsf{r}_{7}(z){=}5 \;\Rightarrow\; \textit{if}\;\ L_5(\textsf{d}_{7}(z)_{0})\;\textit{then}\;\mathcal{G}_{\mathfrak{i}}(\textsf{d}_{7}(z)_{1},v)$; \\
 $\ \ \textsf{r}_{7}(z){=}6 \;\Rightarrow\; \textit{if}\;\ L_6(\textsf{d}_{7}(z)_{0})\;\textit{then}\;\mathcal{G}_{\mathfrak{i}}(\textsf{d}_{7}(z)_{2},\langle\textsf{d}_{7}(z)_{1}+1,v\rangle)$; \\
 $\ \ \textsf{r}_{7}(z){=}7 \;\Rightarrow\; \textit{if}\;L_7^1(\textsf{d}_{7}(z)_{1})\;\textit{then}\; \textit{if}\;L_7^0(\textsf{d}_{7}(z)_{0})\;\textit{then}\;\mathcal{G}_{\mathfrak{i}}(\textsf{d}_{7}(z)_{2},v)$ \\
 ${\bf end}\ {\bf case}$. \\
\end{tabular}
  \caption{Grammar Checker $\mathcal{G}_{\mathfrak{i}}(z,v)$.}
  \label{G4VPC}
\end{center}
\end{figure}
It aborts the interpreter if any one of the following happens:
\begin{itemize}
\item The number of the indices of global names is more than $k$.
\item The number of the indices of local names is more than $i$.
\item There is an index for a free variable.
\end{itemize}
At the name $c$ is recorded the number of the concurrent components the parser has encountered.
Initially there is only one component, which explains the presence of $\overline{c}(1)$.
The parser ends successfully if in the end the value at $c$ is $0$.
The names $g_{1},\ldots,g_{k}$ are used to store the indices for the global names and the names $g_{k+1},\ldots,g_{k+i}$ for the local names.
If $j_{1}$ is the first index for a local name $\mathcal{G}_{\mathfrak{i}}(z,v)$ encounters, the grammar checker stores $j_{1}+1$ at $g_{k+1}$.
This can be done by invoking $g_{k+1}(x).\overline{g_{k+1}}(j_{1}+1)$.
Similarly if $j_{2}$ is the second index of a local name $\mathcal{G}_{\mathfrak{i}}(z,v)$ encounters, then $\mathcal{G}_{\mathfrak{i}}(z,v)$ stores $j_{2}+1$ at $g_{k+2}$.
In completely the same fashion, the grammar checker stores the G\"{o}del numbers that represent the global names at $g_{1},\ldots,g_{k}$ in the order they are discovered.
The second parameter $v$ of $\mathcal{G}_{\mathfrak{i}}(z,v)$ codes up the bound variables already discovered.
If for example the bound variables are $x_{1},\ldots,x_{k'}$ encountered in that order then $v$ would be $\langle \varsigma(x_{k'})+1,\langle \varsigma(x_{k'-1})+1,\ldots,\langle \varsigma(x_{1})+1,0\rangle\ldots\rangle\rangle$.
The Boolean functions $L_1,L_2^1,L_2^0,L_4,L_5,L_6,L_7^1,L_7^0$ check if the number of names is under the limit or if all variables are bound.
We now explain how $\mathcal{G}_{\mathfrak{i}}(z,v)$ works.
\begin{itemize}
\item
$\textsf{r}_{7}(z)=0$.
If the number of concurrent components becomes zero, end $\mathcal{G}_{i}$ successfully and initiate the {\em normalizer} $\mathcal{N}_{\mathfrak{i}}(z,0)$.

\item
$\textsf{r}_{7}(z)=1$.
The number $\textsf{d}_{7}(z)_{1}$ is the index of a bound variable.
The process $L_1(\textsf{d}_{7}(z)_{0})$ needs to make sure that if $\textsf{d}_{7}(z)_{0}$ is neither the index of a local name nor an index of a global name that has been recorded before, then the number stored at $g_{k}$ must be $0$.
If $L_1(\textsf{d}_{7}(z)_{0})$ succeeds, the number $\textsf{d}_{7}(z)_{0}+1$ is stored at the least $g_j$ that has not been used.
After $L_1(\textsf{d}_{7}(z)_{0})$ has ended successfully, the number $\textsf{d}_{7}(z)_{1}+1$ is added to the list of the indices of the bound variables already parsed and is passed down recursively.

\item
$\textsf{r}_{7}(z)=2$.
The subroutine $L_2^1(\textsf{d}_{7}(z)_{1})$ checks if the term represented by the number $\textsf{d}_{7}(z)_{1}$ contains an unknown variable.
It aborts if it encounters an index that does not appear in the tuple encoded by $v$.
If $L_2^1(\textsf{d}_{7}(z)_{1})$ succeeds, $L_2^0(\textsf{d}_{7}(z)_{0})$ checks the legitimacy of the encoding $\textsf{d}_{7}(z)_{0}$.
The process $L_2^0$ works in the same manner as the process $L_1$.

\item
$\textsf{r}_{7}(z)=3$.
The counter at $c$ is incremented by $1$ since one concurrent component is split into two.

\item
$\textsf{r}_{7}(z)=4$.
The subroutine $L_4(\textsf{d}_{7}(z)_{0})$ checks if the number $\textsf{d}_{7}(z)_{0}+1$ is the same as the number stored at some $g_{j}$, where $k+1\le j\le k+i$.
If the answer is positive, $L_4(\textsf{d}_{7}(z)_{0})$ succeeds; otherwise it checks if the number at $g_{k+i}$ is $0$.
If the answer to the latter query is negative, it aborts; otherwise it succeeds after it has stored the number $\textsf{d}_{7}(z)_{0}+1$ at the appropriate $g_{j}$.

\item
$\textsf{r}_{7}(z)=5$.
The subroutine $L_5(\textsf{d}_{7}(z)_{0})$ checks if the number $\textsf{d}_{7}(z)_{0}$ codes up a well formed formula.
Specifically it needs to make sure that the formula coded up by the number does not contain any free variable.

\item
$\textsf{r}_{7}(z)=6$, $\textsf{r}_{7}(z)=7$.
These cases are similar to the cases $\textsf{r}_{7}(z)=1$, $\textsf{r}_{7}(z)=2$ respectively.
\end{itemize}
After $\mathcal{G}_{\mathfrak{i}}(z,0)$ has ended successfully, it starts the process $\mathcal{N}_{\mathfrak{i}}(z,0)$.

\begin{figure}[t]
\begin{center}
\begin{tabular}{l}
 ${\bf case}\ z\ {\bf of}$ \\
 $\ \ \textsf{r}_{7}(z){=}0 \;\Rightarrow\; \overline{h}(0)$; \\
 $\ \ \textsf{r}_{7}(z){=}1 \;\Rightarrow\; Find(\textsf{d}_{7}(z)_{0},w,y).(b)(b(x).\overline{h}(7{*}\langle y,\textsf{d}_{7}(z)_{1},x\rangle{+}1)\,|\,(h)(h(u).\overline{b}(u)\,|\,\mathcal{N}_{\mathfrak{i}}(\textsf{d}_{7}(z)_{2},w)))$; \\
 $\ \ \textsf{r}_{7}(z){=}2 \;\Rightarrow\; Find(\textsf{d}_{7}(z)_{0},w,y).(b)(b(x).\overline{h}(7{*}\langle y,\textsf{d}_{7}(z)_{1},x\rangle{+}2)\,|\,(h)(h(u).\overline{b}(u)\,|\,\mathcal{N}_{\mathfrak{i}}(\textsf{d}_{7}(z)_{2},w)))$; \\
 $\ \ \textsf{r}_{7}(z){=}3 \;\Rightarrow\; (b_{1}b_{2})(b_{1}(x_{1}).b_{2}(x_{2}).\overline{h}(7{*}\langle x_{1},x_{2}\rangle{+}3)$ \\
 $\ \ \ \ \ \ \ \ \ \ \ \ \ \ \ \ \ \ \ \ \ \ \ \ \ \ \ \,|\,(h)(h(u).\overline{b_{1}}(u)\,|\,\mathcal{N}_{\mathfrak{i}}(\textsf{d}_{7}(z)_{0},w))\,|\,(h)(h(u).\overline{b_{2}}(u)\,|\,\mathcal{N}_{\mathfrak{i}}(\textsf{d}_{7}(z)_{1},w)))$; \\
 $\ \ \textsf{r}_{7}(z){=}4 \;\Rightarrow\; (b)(b(x).\overline{h}(7{*}\langle \textsf{d}_{7}(z)_{0},x\rangle{+}4)\,|\,(h)(h(u).\overline{b}(u)\,|\,\mathcal{N}_{\mathfrak{i}}(\textsf{d}_{7}(z)_{1},\langle \textsf{d}_{7}(z)_{0},w\rangle)))$; \\
 $\ \ \textsf{r}_{7}(z){=}5 \;\Rightarrow\; (b)(b(x).\overline{h}(7{*}\langle \textsf{d}_{7}(z)_{0},x\rangle{+}5)\,|\,(h)(h(u).\overline{b}(u)\,|\,\mathcal{N}_{\mathfrak{i}}(\textsf{d}_{7}(z)_{1},w)))$; \\
 $\ \ \textsf{r}_{7}(z){=}6 \;\Rightarrow\; Find(\textsf{d}_{7}(z)_{0},w,y).(b)(b(x).\overline{h}(7{*}\langle y,\textsf{d}_{7}(z)_{1},x\rangle{+}6)\,|\,(h)(h(u).\overline{b}(u)\,|\,\mathcal{N}_{\mathfrak{i}}(\textsf{d}_{7}(z)_{2},w)))$; \\
 $\ \ \textsf{r}_{7}(z){=}7 \;\Rightarrow\; Find(\textsf{d}_{7}(z)_{0},w,y).(b)(b(x).\overline{h}(7{*}\langle y,\textsf{d}_{7}(z)_{1},x\rangle{+}7)\,|\,(h)(h(u).\overline{b}(u)\,|\,\mathcal{N}_{\mathfrak{i}}(\textsf{d}_{7}(z)_{2},w)))$ \\
 ${\bf end}\ {\bf case}$ \\
\end{tabular}
  \caption{Normalizer $\mathcal{N}_{\mathfrak{i}}(z,w)$.}
  \label{N4VPC}
\end{center}
\end{figure}

Using the values stored at $g_{j}$'s,
$\mathcal{N}_{\mathfrak{i}}(z,0)$ transforms G\"{o}del index $z$ to a normal G\"{o}del index $z'$ and then releases $z'$ at $h$.
An implementation of $\mathcal{N}_{\mathfrak{i}}$ is given in Fig.~\ref{N4VPC}.
The operation $Find(j,w,y)$ returns as the value of $y$ the $j'$ such that $j$ is the number stored at $g_{j'}$.
If $j$ appears in $w$, then look for the index $j'$ in $\{k+1,\ldots,k+i\}$; otherwise look for the $j'$ in $\{1,\ldots,k\}$.
In $w$ are stored the local names that might appear in the term being processed.
This additional complexity is due to the fact that a number could denote both a local name and a global name.
By making use of dynamic binding, a recursive invocation of $\mathcal{N}_{\mathfrak{i}}(z,w)$ collects through the localized channel $h$ the normal encoding of a subterm and passes it to its caller using the local channel $b$.

\begin{figure}[t]
\begin{center}
\begin{tabular}{l}
 ${\bf case}\ z\ {\bf of}$ \\
 $\ \ \textsf{r}_{7}(z){=}0 \;\Rightarrow\; {\bf 0}$; \\
 $\ \ \textsf{r}_{7}(z){=}1 \;\Rightarrow\; Nth(\textsf{d}_{7}(z)_{0},y).a_{y}(x).\mathcal{S}_{\mathfrak{i}}([x/\textsf{d}_{7}(z)_{1}]\textsf{d}_{7}(z)_{2})$; \\
 $\ \ \textsf{r}_{7}(z){=}2 \;\Rightarrow\; Nth(\textsf{d}_{7}(z)_{0},y).\overline{a_{y}}(val(\textsf{d}_{7}(z)_{1})).\mathcal{S}_{\mathfrak{i}}(\textsf{d}_{7}(z)_{2})$; \\
 $\ \ \textsf{r}_{7}(z){=}3 \;\Rightarrow\; \mathcal{S}_{\mathfrak{i}}(\textsf{d}_{7}(z)_{0})\,|\,\mathcal{S}_{\mathfrak{i}}(\textsf{d}_{7}(z)_{1})$; \\
 $\ \ \textsf{r}_{7}(z){=}4 \;\Rightarrow\; Nth(\textsf{d}_{7}(z)_{0},y).(a_{y})\mathcal{S}_{\mathfrak{i}}(\textsf{d}_{7}(z)_{1})$; \\
 $\ \ \textsf{r}_{7}(z){=}5 \;\Rightarrow\; \textit{if}\;val(\textsf{d}_{7}(z)_{0})\;\textit{then}\;\mathcal{S}_{\mathfrak{i}}(\textsf{d}_{7}(z)_{1})$; \\
 $\ \ \textsf{r}_{7}(z){=}6 \;\Rightarrow\; Nth(\textsf{d}_{7}(z)_{0},y).!a_{y}(x).\mathcal{S}_{\mathfrak{i}}([x/\textsf{d}_{7}(z)_{1}]\textsf{d}_{7}(z)_{2})$; \\
 $\ \ \textsf{r}_{7}(z){=}7 \;\Rightarrow\; Nth(\textsf{d}_{7}(z)_{0},y).!\overline{a_{y}}(val(\textsf{d}_{7}(z)_{1})).\mathcal{S}_{\mathfrak{i}}(\textsf{d}_{7}(z)_{2})$ \\
 ${\bf end}\ {\bf case}$ \\
\end{tabular}
  \caption{Simulator $\mathcal{S}_{\mathfrak{i}}(z)$.}
  \label{Ivpc}
\end{center}
\end{figure}

The simulator $\mathcal{S}_{\mathfrak{i}}(z)$, defined in Fig.~\ref{Ivpc}, simulates the $\mathbb{VPC}^{!}$-process coded up by $z$ in an on-the-fly fashion.
The arithmetical operations referred to in Fig.~\ref{Ivpc} are described below:
\begin{itemize}
\item
The notation $[x/\textsf{d}_{7}(z)_{1}]\textsf{d}_{7}(z)_{2}$ stands for the G\"{o}del number obtained from $\textsf{d}_{7}(z)_{2}$ by substituting $x$, which must have been instantiated by an input action at the moment this operation is executed, for $\textsf{d}_{7}(z)_{1}$.
\item
The notation $val(\textsf{d}_{7}(z)_{1})$ denotes the result of evaluating the term expression coded up by $\textsf{d}_{7}(z)_{1}$.
Similarly $val(\textsf{d}_{7}(z)_{0})$ denotes the result of evaluating the formula coded up by $\textsf{d}_{7}(z)_{0}$.
Notice that when the evaluation operations start, neither $\textsf{d}_{7}(z)_{0}$ nor $\textsf{d}_{7}(z)_{1}$ contains any variables.
\end{itemize}
The prefix operation $Nth(\textsf{d}_{7}(z)_{0},y)$ returns $j$ as the value of $y$ if $\textsf{d}_{7}(z)_{0}$ is stored at $g_{j}$, where $1\le j\le k+i$.
Some comments on $\mathcal{S}_{\mathfrak{i}}(z)$ are as follows:
\begin{itemize}
\item
$\textsf{r}_{7}(z){=}1$.
The continuation $a_{y}(x).\mathcal{S}_{\mathfrak{i}}([x/\textsf{d}_{7}(z)_{1}]\textsf{d}_{7}(z)_{2})$ is an abbreviation of
\[\textit{if}\;y{=}1\;\textit{then}\;a_{1}(x).\mathcal{S}_{\mathfrak{i}}([x/\textsf{d}_{7}(z)_{1}]\textsf{d}_{7}(z)_{2})\,|\,\ldots\,|\, \textit{if}\;y{=}k\;\textit{then}\;a_{k}(x).\mathcal{S}_{\mathfrak{i}}([x/\textsf{d}_{7}(z)_{1}]\textsf{d}_{7}(z)_{2}).\]
We need to know the number of local names in advance in order to define the above process.
If we hope to drop the superscript $i$ from $\mathcal{I}^{i,a_{1}\ldots a_{k}}_{d}$ defined in (\ref{2011-06-27}) we need to look for a very different interpretation.
In case $\textsf{r}_{7}(z){=}2$, the subterm $\overline{a_{y}}(val(\textsf{d}_{7}(z)_{1})).\mathcal{S}_{\mathfrak{i}}(\textsf{d}_{7}(z)_{2})$ is defined similarly.
\item
$\textsf{r}_{7}(z){=}4$.
This case deserves special attention.
The simulator makes essential use of the dynamic binding plus recursive definition.
The number in $\mathcal{S}_{\mathfrak{i}}(\textsf{d}_{7}(z)_{1})$ that encodes $a_y$ is decoded into $a_y$, and the continuation is restricted by the operator $(a_y)$ dynamically.
An implementation of the recursive call of the simulator in $\mathbb{VPC}^{!}$ would render the localization operator $(a_{y})$ detached from the body to which the operator should apply.
For this reason the present encoding does not work in $\mathbb{VPC}^{!}$.
\end{itemize}
It is remarkable that the interpreter uses only one dummy variable $x$.
No confusion among the bound variables can ever arise.

Although we have not supplied all the details of the interpretation, the key ingredients that support the following claim have all been spelled out.

\begin{thm}\label{VPCeinVPC}
$\mathbb{VPC}^{!} \in \mathbb{VPC}$.
\end{thm}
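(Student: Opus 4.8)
The plan is to exhibit the subbisimilarity $\propto$ that the construction is built around and then verify that the tuple $\langle\{\mathcal{I}^{i,\jmath}_{d}\},\llbracket\_\rrbracket,\propto\rangle$ meets the defining property~(\ref{2011-06-14}) of an interpreter. For a $\mathbb{VPC}^{!}$-process $P$ whose global names lie among $a_{1},\ldots,a_{k}$ and whose number of distinct local names is at most $i$, let $n$ be a normal index of $P$ and take the canonical image of $P$ to be the $\mathbb{VPC}$-process $\mathcal{S}_{\mathfrak{i}}(n)$ run over the channels $a_{1},\ldots,a_{k}$ (globals) and the reserved channels $a_{k+1},\ldots,a_{k+i}$ (locals). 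I would define $P\propto R$ to hold exactly when $R=_{\mathbb{VPC}}\mathcal{S}_{\mathfrak{i}}(n)$ for such an $n$, and then take the extensional closure of this relation under $\,|\,$ and $(c)$ so that clause~(\ref{2016-04-01}) of Definition~\ref{subbisimilarity} has a chance of holding. Closing on the right under $=_{\mathbb{VPC}}$ is what lets the many deterministic decoding steps be absorbed.

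First I would discharge~(\ref{2011-06-14}). The composite $\llbracket k\rrbracket_{c}\,|\,\mathcal{I}^{i,a_{1}\ldots a_{k}}_{c}$ has a single genuine interaction available, namely the synchronisation at $c$ that feeds the number $k$ into the body; this is a nondeterministic step $\stackrel{\iota}{\longrightarrow}$ because it irrevocably commits the interpreter to a particular process. Everything the parser $\mathcal{P}_{\mathfrak{i}}$ does afterwards---grammar checking by $\mathcal{G}_{\mathfrak{i}}$, normalisation by $\mathcal{N}_{\mathfrak{i}}$, and the handshake on the private channel $h$---is purely arithmetical and hence a sequence of deterministic computations $\rightarrow^{*}$, each preserving the absolute-equality class. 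So the reached state is $=_{\mathbb{VPC}}\mathcal{S}_{\mathfrak{i}}(n)$ with $n$ the normal index of the process $P$ coded by $k$, giving $Q\propto^{-1}P$, and uniqueness up to $=_{\mathbb{VPC}}$ follows from determinacy of the arithmetic.

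The substance of the argument is verifying that $\propto$ is a codivergent $\mathbb{VPC}$-bisimulation that is equipollent and extensional. I would proceed by a case analysis on the simulator clauses of Figure~\ref{Ivpc}, matching each rule of the transition system against the structure of the indexed process. The prefix cases ($\textsf{r}_{7}(z)\in\{1,2,6,7\}$) are the on-the-fly heart of the interpreter: an input $a(j)$ of $P$ is matched by the branch $a_{y}(x).\mathcal{S}_{\mathfrak{i}}([x/\textsf{d}_{7}(z)_{1}]\textsf{d}_{7}(z)_{2})$ firing at the decoded channel, after which the arithmetic substitution on the index re-decodes the continuation through finitely many deterministic steps to land in an $\mathcal{S}_{\mathfrak{i}}$-state $\propto^{-1}$-related to the residual of $P$; those right-hand intermediate states are absorbed by the $=_{\mathbb{VPC}}$-closure, which is exactly what Definition~\ref{external-bi} permits. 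Composition ($\textsf{r}_{7}(z){=}3$) is matched structurally and drives the extensionality clause, while the conditional ($\textsf{r}_{7}(z){=}5$) is matched using the decidability of Presburger validity so that $val$ terminates. Codivergence holds because the decoding work between two consecutive simulated communications is always finite, so an infinite $\tau$-run on either side forces infinitely many genuine communications on the other; equipollence holds because a top-level unguarded prefix of $P$ is exposed, after finite decoding, as the matching $a_{y}$ prefix and conversely.

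I expect the localisation clause $\textsf{r}_{7}(z){=}4$ to be the main obstacle, for precisely the reason flagged in the text: the continuation $(a_{y})\mathcal{S}_{\mathfrak{i}}(\textsf{d}_{7}(z)_{1})$ applies the restriction dynamically, with the scope fixed only after the index has been decoded at run time. I would have to argue that the reserved local channels $a_{k+1},\ldots,a_{k+i}$ can be allocated and reused so that the dynamically imposed $(a_{y})$ captures exactly the body it should, matching the static scope of $(c)T$ in $\mathbb{VPC}^{!}$ and never leaking a local name into a sibling component; this is where the bound $i$ and the normaliser's bookkeeping in $w$ are indispensable. It is also the step that genuinely needs $\mathbb{VPC}$ rather than $\mathbb{VPC}^{!}$, since a replicated re-entry of the simulator would detach $(a_{y})$ from the body to which it must apply. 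Once the matching is established, totality is immediate from the surjectivity of the G\"{o}del indexing, and soundness follows from the general fact that $=_{\mathbb{VPC}}$ is the largest extensional, equipollent, codivergent bisimulation, so that $\propto^{-1}\circ=_{\mathbb{VPC}^{!}}\circ\propto\,\subseteq\,=_{\mathbb{VPC}}$.
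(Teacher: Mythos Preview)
Your proposal is correct and tracks the paper's argument closely: the same operational case analysis on the clauses of $\mathcal{S}_{\mathfrak{i}}$, the same absorption of arithmetic decoding into deterministic $\rightarrow^{*}$ steps, the same identification of the localisation clause as the place where parametric definition (rather than replication) is essential, and the same appeal to completeness for the number-theoretic subroutines.

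The one organisational difference is in the choice of $\propto$. The paper takes $\propto_{\mathfrak{i}}$ to be the \emph{structural embedding} of $\mathbb{VPC}^{!}$ into $\mathbb{VPC}$ (replication replaced by the derived operator of~(\ref{2016-04-03-a})--(\ref{2016-04-03-b})) composed with $=_{\mathbb{VPC}}$; this is trivially a subbisimilarity, so all the work migrates into showing that the auxiliary relation $\mathfrak{I}=\mathfrak{S};=_{\mathbb{VPC}}$, pairing $P$ with $(d)(\overline{d}(k)\,|\,\mathcal{I}^{i,\jmath}_{d})$, enjoys the explicit bisimulation and codivergence properties, from which~(\ref{2011-06-14}) follows. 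You instead fold the simulator into the definition of $\propto$ itself and then take an extensional closure, which makes~(\ref{2011-06-14}) almost tautological but pushes the real work into verifying that your $\propto$ is a subbisimilarity. Both routes are sound; the paper's decomposition has the advantage that the subbisimilarity it names is independent of the particular interpreter, while yours has the advantage that the single relation you build is exactly the one witnessing correctness. The bisimulation clauses you would need to check are precisely the four itemised properties the paper lists for $\mathfrak{S}$, so the technical content coincides.
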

\begin{proof}
The argument given here rests on the completeness of $\mathbb{VPC}$~\cite{FuYuxi-2013-VPC} and our trust in the Church-Turing Thesis.
We summarize the main points below:
\begin{itemize}
\item
The encoding of the natural numbers is given by the class $\{\overline{a}(k)\}_{k\in\textsf{N},a\in\mathcal{N}}$.
This encoding is correct with respect to the absolute equality $=_{\mathbb{VPC}}$~\cite{FuYuxi-2013-VPC}.
\item
All the number theoretical operations involved in the definition of $\mathcal{I}^{i,\jmath}_{d}$, specifically the parser defined in~(\ref{2016-04-04}), are computable.
It follows that these operations are all definable in $\mathbb{VPC}$.

\item
The relation $\propto_{\mathfrak{i}}:\mathbb{VPC}^{!} \rightarrow \mathbb{VPC}$ is basically the structural embedding composed with the equality relation $=_{\mathbb{VPC}}$, where the replication operator in $\mathbb{VPC}^{!}$ is interpreted by the derived replication operator defined in~(\ref{2016-04-03-a}) and~(\ref{2016-04-03-b}).

\item
Let $\mathfrak{I}$ denote the composition $\mathfrak{S};=_{\mathbb{VPC}}$, where $\mathfrak{S}$ denotes the following relation
\[\left\{(P,(d)(\overline{d}(k)\,|\,\mathcal{I}^{i,\jmath}_{d})) \left|
\begin{array}{l}
P\ \mathrm{is}\ \mathrm{a}\ \mathbb{VPC}^{!}\ \mathrm{process},\
\mathrm{all}\ \mathrm{global}\ \mathrm{names}\ \mathrm{of}\ P \\
\mathrm{are}\ \mathrm{in}\ \jmath,\ \mathrm{the}\ \mathrm{number}\ \mathrm{of}\ \mathrm{local}\ \mathrm{names}\ \mathrm{of}\ P\ \mathrm{is} \\
\mathrm{no}\ \mathrm{more}\ \mathrm{than}\ i,\ k\ \mathrm{is}\ \mathrm{an}\ \mathrm{index}\ \mathrm{of}\ P,\ \mathrm{and}\ d\notin\jmath.
\end{array}
\right\}.\right.\]
To establish (\ref{2011-06-14}) it is sufficient to demonstrate that $\mathfrak{I}$ is a subbisimilarity.
It is easy to argue informally that the definition of the simulator in Fig.~\ref{Ivpc} renders true the following statements:
\begin{itemize}
\item Since all the actions of $P$ are also actions of $(d)(\overline{d}(k)\,|\,\mathcal{I}^{i,\jmath}_{d})$, the following explicit bisimulation property is valid whenever $P\mathfrak{S}M=_{\mathbb{VPC}}Q$:
\begin{itemize}
\item If $P\stackrel{\alpha}{\longrightarrow}P'$ then $Q\rightarrow^{*}\stackrel{\alpha}{\longrightarrow}Q'=_{\mathbb{VPC}}M'\mathfrak{S}^{-1}P'$ for some $Q'$ and $M'$.
\item If $Q\stackrel{\alpha}{\longrightarrow}Q'$ and $\alpha\ne\tau$, then $P\stackrel{\alpha}{\longrightarrow}P'\mathfrak{S}M'=_{\mathbb{VPC}}Q'$ for some $P',M'$.
\item If $Q\stackrel{\iota}{\longrightarrow}Q'$ then $P\stackrel{\iota}{\longrightarrow}P'\mathfrak{S}M'=_{\mathbb{VPC}}Q'$ for some $P',M'$.
\item If $Q\rightarrow Q'$ then either $P\mathfrak{S}M=_{\mathbb{VPC}}Q'$ or $P\rightarrow P'\mathfrak{S}M'=_{\mathbb{VPC}}Q'$ for some $P',M'$.
\end{itemize}
\item $\mathfrak{I}$ is codivergent since the extra number theoretical manipulations do not introduce any divergence.
\end{itemize}
These properties are enough for us to conclude that $\mathfrak{I}$ is a subbisimilarity.
\end{itemize}
This completes the proof sketch.
\end{proof}

\subsection{Universal VPC Process}\label{sec-Unversal-VPC-Process}

A self-interpreter for $\mathbb{M}$ is an interpreter of $\mathbb{M}$ in $\mathbb{M}$.
Such an interpreter is based on a subbisimilarity from $\mathbb{M}$ to $\mathbb{M}$.
In general there is more than one subbisimilarity from $\mathbb{M}$ to $\mathbb{M}$~\cite{FuYuxi},
among which the absolute equality $=_{\mathbb{M}}$ offers a canonical relation in the sense that every process is interpreted by itself.
An interpreter of $\mathbb{M}$ in $\mathbb{M}$ based on the absolute equality $=_{\mathbb{M}}$ is called a {\em universal process} for $\mathbb{M}$.
We will write $\langle\{\mathcal{U}_{c}^{i,\jmath}\}_{i\in\textsf{N},\jmath\in\mathcal{N}^{*},c\notin\jmath},\llbracket \_\rrbracket\rangle$ for a universal process of $\mathbb{M}$.
For all $i,\jmath,c$ the process $\mathcal{U}_{c}^{i,\jmath}$ must satisfy the following property: If $P$ is of the right type and $k$ is a G\"{o}del index of $P$ then a unique $Q$ exists such that
\begin{equation}\label{2011-07-06}
\llbracket k\rrbracket_{c}\,|\,\mathcal{U}_{c}^{i,\jmath}\stackrel{\iota}{\longrightarrow}\;Q=_{\mathbb{M}}P.
\end{equation}
The aim of this subsection is to modify the interpreter constructed in the previous subsection to obtain a universal process for $\mathbb{VPC}$.

We need to explain how parametric definitions are treated.
Now every $\mathbb{VPC}$-term refers to only a finite number of parametric definitions.
Suppose the parametric definitions appearing in $T$ are given by the following equations:
\begin{equation}\label{2011-09-03}
\begin{array}{rcl}
D_{1}(x_{11},\ldots,x_{1i_{1}}) &=& T_{1}, \\
 &\vdots& \\
D_{k}(x_{k1},\ldots,x_{ki_{k}}) &=& T_{k},
\end{array}
\end{equation}
for some $k\ge0$. When $k=0$ we understand that $T$ contains no occurrence of any parametric definition.
The G\"{o}del index $\llbracket T\rrbracket_{\mathfrak{u}}$ of $T$ is defined as follows:
\begin{eqnarray}\label{2011-07-08}
\llbracket T\rrbracket_{\mathfrak{u}} \stackrel{\rm def}{=} \langle k,\langle \llbracket T\rrbracket_{\mathfrak{d}},\llbracket T\rrbracket_{\mathfrak{p}}\rangle\rangle.
\end{eqnarray}
The components of (\ref{2011-07-08}) have the following readings:
\begin{itemize}
\item $k$ is the number of parametric definitions in (\ref{2011-09-03}).
According to our notational convention $\llbracket T\rrbracket_{\mathfrak{d}}$, which is a $k$-ary tuple, is $0$ when $k=0$.

\item The index $\llbracket T\rrbracket_{\mathfrak{d}}$ codes up all the parametric definitions in (\ref{2011-09-03}).
It is given by
\begin{eqnarray}\label{2011-11-05}
\llbracket T\rrbracket_{\mathfrak{d}} \stackrel{\rm def}{=}
\begin{array}{c}
\langle\,\langle i_{1},\langle \varsigma(x_{11}),\ldots,\varsigma(x_{1i_{1}}),\llbracket T_{1}\rrbracket_{\mathfrak{p}}\rangle\rangle, \\
\vdots \\
\ \ \ \ \langle i_{k},\langle \varsigma(x_{k1}),\ldots,\varsigma(x_{ki_{k}}),\llbracket T_{k}\rrbracket_{\mathfrak{p}}\rangle\rangle\,\rangle.
\end{array}
\end{eqnarray}
Notice that this is not a self-referential definition.

\item
The structural definition of $\llbracket\_\rrbracket_{\mathfrak{p}}$ is given in Fig.~\ref{G-Index-VPCdef}.
The only thing worth mentioning is that the index of $D_{j}(t_{j1},\ldots,t_{ji_{j}})$ contains the information about the equation in which $D_{j}$ is defined, the number of parameters of $D_{j}$, and all the terms that instantiate the parameters.
\end{itemize}
At top level the indices of all $\mathbb{VPC}$-terms are of the form (\ref{2011-07-08}).
One may think of $\llbracket T\rrbracket_{\mathfrak{p}}$ as the main program and $\llbracket T\rrbracket_{\mathfrak{d}}$ as the subroutines necessary when executing the program.

\begin{figure}[t]
\begin{center}
\begin{tabular}{rcl}
$\llbracket{\bf 0}\rrbracket_{\mathfrak{p}}$ &$\;\stackrel{\rm def}{=}\;$& $0$, \\
$\llbracket a(x).T\rrbracket_{\mathfrak{p}}$ &$\;\stackrel{\rm def}{=}\;$& $6*\langle \varsigma(a),\varsigma(x),\llbracket T\rrbracket_{\mathfrak{p}}\rangle+1$, \\
$\llbracket \overline{a}(t).T\rrbracket_{\mathfrak{p}}$ &$\;\stackrel{\rm def}{=}\;$& $6*\langle \varsigma(a),\llbracket t\rrbracket_{\varsigma},\llbracket T\rrbracket_{\mathfrak{p}}\rangle+2$, \\
$\llbracket T\,|\,T'\rrbracket_{\mathfrak{p}}$ &$\;\stackrel{\rm def}{=}\;$& $6*\langle \llbracket T\rrbracket_{\mathfrak{p}},\llbracket T'\rrbracket_{\mathfrak{p}}\rangle+3$, \\
$\llbracket (c)T\rrbracket_{\mathfrak{p}}$ &$\;\stackrel{\rm def}{=}\;$& $6*\langle \varsigma(c),\llbracket T\rrbracket_{\mathfrak{p}}\rangle+4$, \\
$\llbracket \textit{if}\;\varphi\;\textit{then}\;T\rrbracket_{\mathfrak{p}}$ &$\;\stackrel{\rm def}{=}\;$& $6*\langle \llbracket\varphi\rrbracket_{\varsigma},\llbracket T\rrbracket_{\mathfrak{p}}\rangle+5$, \\
$\llbracket D_{j}(t_{j1},\ldots,t_{jn_{j}})\rrbracket_{\mathfrak{p}}$ &$\;\stackrel{\rm def}{=}\;$& $6*\langle j,\langle n_{j},\langle\llbracket t_{j1}\rrbracket_{\varsigma},\ldots,\llbracket t_{jn_{j}}\rrbracket_{\varsigma}\rangle\rangle\rangle+6$. \\
\end{tabular}
  \caption{G\"{o}del Index of $\mathbb{VPC}$-Term.}
  \label{G-Index-VPCdef}
\end{center}
\end{figure}

Our universal process $\{\mathcal{U}_{d}^{i,\jmath}\}_{i\in\textsf{N},\jmath\in\mathcal{N}^{*},d\notin\jmath}$ for $\mathbb{VPC}$ is defined as follows:
\begin{eqnarray}\label{2011-07-09}
\mathcal{U}_{d}^{i,a_{1}\ldots a_{k}} &\stackrel{\rm def}{=}& d(z).(h)(\mathcal{P}_{\mathfrak{u}}(z)\,|\,h(z).(e)(\mathcal{D}_{\mathfrak{u}}((z)_{0},z_{1,0})\,|\,\mathcal{S}_{\mathfrak{u}}(z_{1,1})))
\end{eqnarray}
for all $i,a_{1},\ldots,a_{k},d$ such that $d\notin\{a_{1},\ldots,a_{k}\}$.
The process $\mathcal{U}_{d}^{i,a_{1}\ldots a_{k}}$ is similar to $\mathcal{I}_{d}^{i,a_{1}\ldots a_{k}}$ defined in (\ref{2011-06-27}).
We leave out the definition of the parser $\mathcal{P}_{\mathfrak{u}}(z)$ since it is similar to $\mathcal{P}_{\mathfrak{i}}(z)$ and it is purely arithmetical.
The process $\mathcal{D}_{\mathfrak{u}}((z)_{0},z_{1,0})$ is an instantiation of the parametric definition $\mathcal{D}_{\mathfrak{u}}(x,y)$
given by the following equation:
\begin{eqnarray*}
\mathcal{D}_{\mathfrak{u}}(x,y) &=& !e(v).\textit{if}\;1\le (v)_{0}\le x\;\textit{then}\;\textit{let}\;w=(v)_{0}-1\;\\
 && \ \textit{in}\;\textit{let}\;u=y_{w,0}\;\textit{in}\;\mathcal{S}_{\mathfrak{u}}([v_{1,1,u-1}/y_{w,1,u-1}]\ldots[v_{1,1,0}/y_{w,1,0}]y_{w,1,u}).
\end{eqnarray*}
The first parameter of $\mathcal{D}_{\mathfrak{u}}(x,y)$ indicates the number of the mutually dependent equations.
The second parameter is the G\"{o}del index of these parametric definitions that takes the form of (\ref{2011-11-05}).
The definition $\mathcal{D}_{\mathfrak{u}}(x,y)$ is essentially the interpretation of $\llbracket T\rrbracket_{\mathfrak{d}}$.
It is able to simulate all the parametric definitions that are encoded in $y$.
Again this is possible because the simulation is on-the-fly.
The simulator $\mathcal{S}_{\mathfrak{u}}(z)$ in (\ref{2011-07-09}) is defined in Fig.~\ref{Uvpc}.
The recursive functions $\textsf{r}_{6},\textsf{d}_{6}$ are similar to those of $\textsf{r}_{7},\textsf{d}_{7}$ respectively.
In case $\textsf{r}_{6}(z)=6$ the subroutine $\mathcal{D}_{\mathfrak{u}}((z)_{0},z_{1,0})$ is invoked with the parameter $\textsf{d}_{6}(z)$.

\begin{figure}[t]
\begin{center}
\begin{tabular}{l}
 ${\bf case}\ z\ {\bf of}$ \\
 $\ \ \textsf{r}_{6}(z){=}0 \;\Rightarrow\; {\bf 0}$; \\
 $\ \ \textsf{r}_{6}(z){=}1 \;\Rightarrow\; Nth(\textsf{d}_{6}(z)_{0},y).a_{y}(x).\mathcal{S}_{\mathfrak{u}}([x/\textsf{d}_{6}(z)_{1}]\textsf{d}_{6}(z)_{2})$; \\
 $\ \ \textsf{r}_{6}(z){=}2 \;\Rightarrow\; Nth(\textsf{d}_{6}(z)_{0},y).\overline{a_{y}}(val(\textsf{d}_{6}(z)_{1})).\mathcal{S}_{\mathfrak{u}}(\textsf{d}_{6}(z)_{2})$; \\
 $\ \ \textsf{r}_{6}(z){=}3 \;\Rightarrow\; \mathcal{S}_{\mathfrak{u}}(\textsf{d}_{6}(z)_{0})\,|\,\mathcal{S}_{\mathfrak{u}}(\textsf{d}_{6}(z)_{1})$; \\
 $\ \ \textsf{r}_{6}(z){=}4 \;\Rightarrow\; Nth(\textsf{d}_{6}(z)_{0},y).(a_{y})\mathcal{S}_{\mathfrak{u}}(\textsf{d}_{6}(z)_{1})$; \\
 $\ \ \textsf{r}_{6}(z){=}5 \;\Rightarrow\; \textit{if}\;val(\textsf{d}_{6}(z)_{0})\;\textit{then}\;\mathcal{S}_{\mathfrak{u}}(\textsf{d}_{6}(z)_{1})$; \\
 $\ \ \textsf{r}_{6}(z){=}6 \;\Rightarrow\; \overline{e}(\textsf{d}_{6}(z))$ \\
 ${\bf end}\ {\bf case}$ \\
\end{tabular}
  \caption{Simulator $\mathcal{S}_{\mathfrak{u}}(z)$.}
  \label{Uvpc}
\end{center}
\end{figure}

By an argument similar to the one given in the proof of Theorem~\ref{VPCeinVPC}, one can convince oneself of the validity of the following result.

\begin{thm}\label{2011-07-10}
$\mathbb{VPC}\in\mathbb{VPC}$.
\end{thm}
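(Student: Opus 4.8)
The plan is to follow the template of the proof of Theorem~\ref{VPCeinVPC}, again leaning on the completeness of $\mathbb{VPC}$ to discharge all the purely arithmetical work. Every auxiliary operation occurring in the definition~(\ref{2011-07-09}) of $\mathcal{U}_d^{i,\jmath}$ --- the parser $\mathcal{P}_{\mathfrak{u}}$, the decomposition functions $\textsf{r}_6$ and $\textsf{d}_6$, and the lookups $Nth$, $val$ together with the substitution operator used in Fig.~\ref{Uvpc} --- is computable, hence definable in $\mathbb{VPC}$. The one structural novelty over the $\mathbb{VPC}^!$ interpreter is that the source now carries parametric definitions, indexed by the block $\llbracket T\rrbracket_{\mathfrak{d}}$ of~(\ref{2011-11-05}); these are interpreted on the fly by the replicated body $\mathcal{D}_{\mathfrak{u}}$ listening on the private channel $e$. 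Since we are constructing a self-interpreter, the witnessing relation is the absolute equality itself, so each process is to be reproduced by a process $=_{\mathbb{VPC}}$ to it.

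First I would introduce the relation $\mathfrak{S}_{\mathfrak{u}}$ pairing each $\mathbb{VPC}$-process $P$ --- whose global names lie in $\jmath$, whose distinct local names number at most $i$, and whose parametric definitions take the shape~(\ref{2011-09-03}) --- with its loaded universal process $(d)(\overline{d}(k)\,|\,\mathcal{U}_d^{i,\jmath})$, where $k$ is a G\"{o}del index of $P$ of the form~(\ref{2011-07-08}) and $d\notin\jmath$. Setting $\mathfrak{U} = \mathfrak{S}_{\mathfrak{u}}; =_{\mathbb{VPC}}$, it suffices to show that $\mathfrak{U}$ is a subbisimilarity from $\mathbb{VPC}$ to $\mathbb{VPC}$; totality and soundness are immediate from reflexivity of $=_{\mathbb{VPC}}$, and the defining transition~(\ref{2011-07-06}) then follows from the single $\iota$-step that loads $k$ together with the clauses below. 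The explicit bisimulation clauses are verified just as in the four bullet points of the earlier proof: an action of $P$ is matched by $Q$ after a run of bookkeeping computations $\rightarrow^*$ followed by the same action; a non-$\tau$ action of $Q$ is matched directly by $P$; each nondeterministic step $\stackrel{\iota}{\longrightarrow}$ of $Q$ is matched by the corresponding $\stackrel{\iota}{\longrightarrow}$ of $P$; and each deterministic step $\rightarrow$ of $Q$ leaves $P$ in the same $=_{\mathbb{VPC}}$-class. The crux is the case $\textsf{r}_6(z)=6$ of Fig.~\ref{Uvpc}: the emitted message $\overline{e}(\textsf{d}_6(z))$ is consumed by a single copy of $\mathcal{D}_{\mathfrak{u}}$, which selects the invoked equation out of~(\ref{2011-11-05}), substitutes the actual for the formal parameters, and relaunches $\mathcal{S}_{\mathfrak{u}}$ on the body. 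This is precisely the unfolding rule for $D(t_1,\ldots,t_k)$, so one unfolding in $P$ corresponds to a bounded block of deterministic internal steps of $Q$, and, because $e$ is restricted, nested and mutually recursive definitions are served without interference.

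The main obstacle I expect is codivergence. On one side I would argue that the interpreter introduces no divergence of its own: every arithmetical subroutine terminates, and the replicated $\mathcal{D}_{\mathfrak{u}}$ can fire only when it receives a message on $e$, which happens only at a genuine definition invocation; hence an infinite internal run of $Q$ must pass through infinitely many simulated unfoldings, each of which is mirrored by an unfolding of $P$. On the other side, a divergent source computation --- one that unfolds definitions forever --- forces an infinite internal run of $Q$, since each simulated unfolding costs only finitely many silent steps, so the two divergences proceed in lock-step and both clauses of codivergence hold. Once codivergence and the explicit bisimulation clauses are secured, extensionality and equipollence follow as before, because the loaded universal process exposes exactly the same external interface at $a_1,\ldots,a_k$ as $P$ does, and the theorem follows.
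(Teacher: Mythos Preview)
Your proposal is correct and follows essentially the same approach as the paper, which in fact offers only a one-sentence proof (``By an argument similar to the one given in the proof of Theorem~\ref{VPCeinVPC}\ldots''). You have correctly expanded that sketch: you reuse the structure of the earlier argument, isolate the new case $\textsf{r}_6(z)=6$ where the simulator hands off to $\mathcal{D}_{\mathfrak{u}}$ via the private channel $e$, and argue that this exactly mirrors the unfolding rule for parametric definitions; your treatment of codivergence is also in line with what the paper leaves implicit.

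One small wobble worth tightening: the phrase ``it suffices to show that $\mathfrak{U}$ is a subbisimilarity \ldots\ totality and soundness are immediate from reflexivity of $=_{\mathbb{VPC}}$'' slightly conflates two things. What you actually need for~(\ref{2011-07-06}) is $\mathfrak{U}\subseteq\,=_{\mathbb{VPC}}$, and this follows directly from the maximality clause in Definition~\ref{absolute-equality} once you have established condition~(\ref{2016-03-31}) (extensional, equipollent, codivergent, bisimilar) for $\mathfrak{U}$; totality and soundness are then by-products rather than inputs. This is a matter of phrasing, not substance.
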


\section{Application}\label{sec-Application}

The existence of a universal process can be seen as an expressiveness criterion.
There cannot be any universal process for CCS~\cite{Milner1989} or the pure process-passing calculus~\cite{Sangiorgi1992,Thomsen1995} since neither is complete~\cite{FuYuxi}.
But once an interaction model does admit some sort of universal process, a whole range of new applications are available.
In this section we sketch three of them.

\subsection{Process Passing as Value Passing}\label{sec-Process-Passing-as-Value-Passing}

The most valuable contribution of a universal process is that it allows a receiving process to interpret a number as a process.
This is a very useful property when applying $\mathbb{VPC}$ to tackle practical programming issues.
But is it necessary to pass a process from one location to another?
If the process that appears in the target environment is completely the same as the one sent from the source environment, a positive answer to the question can hardly be convincing.
What is useful in practice is that the source process sends an {\em abstraction} parameterized over names, and the process on the receiving end instantiates the parameters of the abstraction by its local names.
This way of introducing the higher order feature in the $\pi$-calculus is adopted in~\cite{Sangiorgi1992,Sangiorgi1993}.
We follow the same approach to extend $\mathbb{VPC}$.
Our higher order $\mathbb{VPC}$ has the following grammar:
\[T := \ldots \mid X(a_{1},\ldots,a_{j}) \mid A(a_{1},\ldots,a_{j}) \mid a(X{:}\langle i,j\rangle).T \mid \overline{a}(A{:}\langle i,j\rangle).T,\]
where only the higher order terms are indicated.
An abstraction $A$ is a term whose global names are abstracted away.
A process with $j$ global names can be turned into an abstraction with $j$ parameters.
If for example $T$ is a term with the global names $c_{1},\ldots,c_{j}$ then $\lambda c_{1}\ldots c_{j}.T$ is an abstraction.
In the above syntax $A:\langle i,j\rangle$ indicates that $A$ is an abstraction with $i$ local names and $j$ parameters.
The term $a(X{:}\langle i,j\rangle).T$ is a higher order input, in which $\langle i,j\rangle$ provides the typing information of the abstraction variable $X$,
$\overline{a}(A{:}\langle i,j\rangle).T$ is in higher order output form,
$X(a_{1},\ldots,a_{j})$ is an instantiation of the abstraction variable $X$ at $a_{1},\ldots,a_{j}$ and $A(a_{1},\ldots,a_{j})$ an instantiation of the abstraction $A$ at $a_{1},\ldots,a_{j}$.
The instantiation of the abstraction $\lambda c_{1}\ldots c_{j}.T$ at $a_{1},\ldots,a_{j}$ is syntactically the same as the term $T\{a_{1}/c_{1}.\ldots,a_{j}/c_{j}\}$.
Instantiations must be type correct.
Formally the names in the higher order $\mathbb{VPC}$ are typed in the same way as the channels in the higher order $\pi$-calculus are typed~\cite{Sangiorgi1992,Sangiorgi1993}.
We ignore the type system in the present light weight treatment.
In addition to the operational semantics of $\mathbb{VPC}$ the higher order $\mathbb{VPC}$ has the following semantic rules:
\[\begin{array}{ccc}
\inference{}{a(X{:}\langle i,j\rangle).T\stackrel{a(A)}{\longrightarrow}T\{A/X\}}
\ & \inference{}{\overline{a}(A{:}\langle i,j\rangle).T\stackrel{\overline{a}(A)}{\longrightarrow}T}
\ & \inference{S\stackrel{a(A)}{\longrightarrow}S'\ \ \
T\stackrel{\overline{a}(A)}{\longrightarrow}T'}{S\,|\,T\stackrel{\tau}{\longrightarrow}S'\,|\,T'}
\end{array}\]

\vspace*{1mm}

\noindent In the higher order input rule $A$ must be of the same type as the variable $X$.

We can now explain how to simulate the operational semantics of the higher order $\mathbb{VPC}$ in the first order $\mathbb{VPC}$.
Let $\upsilon$ stand for a partial function from the set of abstraction variables to $\textsf{V}$ that is injective on its finite domain of definition.
The notation $\upsilon[Z{\rightarrow}z]$ refers to the function that is the same as $\upsilon$ except that it sends $Z$ onto $z$.
The nontrivial part of the encoding is given below:
\begin{eqnarray*}
\llbracket X(a_{1},\ldots,a_{j})\rrbracket_{\upsilon} &=& (d)(\overline{d}([\varsigma(a_{1})/v_1,\ldots,\varsigma(a_{j})/v_j]\upsilon(X))\,|\,\mathcal{U}_{d}^{i,a_{1}\ldots a_{j}}), \\
\llbracket a(X{:}\langle i,j\rangle).T\rrbracket_{\upsilon} &=& a(x).\llbracket T\rrbracket_{\upsilon[X{\rightarrow}x]},\ \mathrm{where}\ x\ \mathrm{is}\ \mathrm{chosen}\ \mathrm{such}\ \mathrm{that}\ \mathrm{it}\ \mathrm{is}\ \mathrm{not}\ \mathrm{in}\ T, \\
\llbracket \overline{a}(A{:}\langle i,j\rangle).T\rrbracket_{\upsilon} &=& \overline{a}(\llbracket A\rrbracket_{\upsilon}).\llbracket T\rrbracket_{\upsilon}.
\end{eqnarray*}
The operation $[\varsigma(a_{1})/v_1,\ldots,\varsigma(a_{j})/v_j](\_)$ replaces in $\upsilon(X)$ the indexes $v_1,\ldots,v_j$ of $j$ global names by the indexes of $a_{1},\ldots,a_{j}$ respectively.
Notice that since we are dealing with processes containing neither first order variables nor higher order variables, by the time the on-the-fly interpretation reaches $\upsilon(X)$, it has already been instantiated by a number.
Thus $v_1,\ldots,v_j$ can be safely calculated from $\upsilon(X)$.
The encoding of a higher order $\mathbb{VPC}$ process is given by $\llbracket P\rrbracket_{\emptyset}$, where $\emptyset$ is the nowhere defined function.

The translation of the higher order $\mathbb{VPC}$ into the first order $\mathbb{VPC}$ is notably different from the translation of the higher $\pi$ into the first order $\pi$~\cite{Sangiorgi1993}.
For one thing the simulation of the higher order communication in $\mathbb{VPC}$ is achieved by code transmission, whereas in the $\pi$ scenario the simulation is implemented via access control.
An advantage of our encoding is that it allows the `user' to exert tight control over the executions.
The user may wish to terminate the simulation after a certain amount of time, bounded by a time complexity function.
This can be implemented by incorporating into the encoding a timer that counts the number of simulation steps.

In practice higher order features are implemented in the same fashion.
In real world what is really passed over in a higher order communication is some code of a routine/process.
The code is not necessarily a number.
But the receiver interprets the code in the same way a universal process of $\mathbb{VPC}$ interprets a number.
See Section~\ref{sec-Programming-Paradigm} for more discussion.

\subsection{Communication Security}\label{sec-Communication-Security}

The completeness of an interaction model means that any encryption/decryption algorithm can be implemented in it and an encrypted text can be passed from one process to another.
By exploiting that fact, a universal process can offer an effective way to enhance the security of communications.
Suppose party A intends to send a piece of programme to party B through a public channel.
There is no way to prevent anyone from eavesdropping on the communication channels.
What party A can do is to encrypt the G\"{o}del number of the programme before sending it to party B.
After receiving the number, party B decodes the number to recover the G\"{o}del index.
It then places the encoded programme in its private environment and puts it into action by invoking a universal process.
Far more complicated scenarios can be designed along this line of thinking.

The point is that the existence of a universal process allows one to implement the well known security protocols in $\mathbb{VPC}$ to enhance communication security.
This is a more traditional approach compared to the one that introduces explicit operators to model security protocols in process calculi~\cite{AbadiGordon1999}.
In security world one assumes that all channels are insecure.
No matter what kind of operators are introduced in a calculus to enhance security of communication, they must be implemented using cryptographic techniques.
In one way or another one resorts to a universal process.

\subsection{Programming Paradigm}\label{sec-Programming-Paradigm}

If $\mathbb{VPC}$ is seen as a machine model, what would be an implementation of a higher order programming language in $\mathbb{VPC}$?
If $\mathbb{VPC}$ is seen as a programming language, what kind of programming paradigm does it support?
The issue of constructing a higher order programming language on top of a process model has been studied by several research groups, see the references in~\cite{SewellWojciechowskiUnyapoth2010}.
This is not the right place to overview the existing approaches and tools.
What we are going to do is to propose a new programming paradigm that makes essential use of universal processes.
To explain our idea we introduce a new process model referred to as $\mathbb{PL}$.
This is essentially the same as $\mathbb{VPC}$.
But instead of sending and receiving numbers, a $\mathbb{PL}$ process sends and receives strings.
The reason for a string-passing process calculus is that it provides the right level of abstraction to study programming theory in process algebra.
A piece of program is nothing but a string, which can be parsed, type checked and executed.

Let $\Sigma$ be a finite set of {\em symbols} and $\Sigma^{*}$ be the set of finite {\em strings} over $\Sigma$.
We will write $\varrho$ for a string variable and $\ell$ for a {\em string term} constructed from strings, string variables and string operations.
We write $\psi$ for a boolean formula about strings.
For simplicity we see a symbol as a string of length one.
We choose a set of basic string operations on $\Sigma^{*}$ and a set of basic logic operations on $\Sigma^{*}$.
We obviously need the head, tail and length operations, as well as the binary prefix relation among others.
The particular choice of the operations is not our concern.

The set of the $\mathbb{PL}$-terms is defined by the following BNF:
\begin{eqnarray*}
T &:=& {\bf 0} \mid a(\varrho).T \mid \overline{a}(\ell).T \mid T\,|\,T' \mid (c)T \mid \textit{if}\;\psi\;\textit{then}\;T \mid D(\ell_{1},\ldots,\ell_{k}).
\end{eqnarray*}
The operational semantics of $\mathbb{PL}$ is obtained from the labeled transition system of $\mathbb{VPC}$ by substituting strings for numbers.
The model $\mathbb{PL}$ is easily seen to be complete.
In fact there is clearly an effective bijection between $\Sigma^{*}$ and $\textsf{N}$.
Using this bijection it ought to be easy to construct two subbisimilarities to support the claim that $\mathbb{PL}\sqsubseteq\mathbb{VPC}$ and $\mathbb{VPC}\sqsubseteq\mathbb{PL}$.
We assume that $\Sigma$ contains the proper subset $\Sigma_{\textsf{N}}=\{0,\textsf{succ},+,\times\}$, which makes the definition of the length function possible.
The calculus $\mathbb{PL}_{\textsf{N}}$ defined in terms of the symbol set $\Sigma_{\textsf{N}}$ is as expressive as $\mathbb{VPC}$.
We may think of $\mathbb{PL}_{\textsf{N}}$ as a machine model on which $\mathbb{PL}$ is implemented, by which we mean that there is an interpreter of $\mathbb{PL}$ in $\mathbb{PL}_{\textsf{N}}$.

Now the general framework has been set up, let's explain how programming languages can be implemented in our model.
Suppose $\mathfrak{L}$ is a concurrent, typed programming language, which could be as sophisticated as a full-fledged programming language or as simple as the concurrent language studied in~\cite{Milner1989}.
The {\em definition} of $\mathfrak{L}$ is given by a parser $\{\mathfrak{P}_{c}\}_{c\in\mathcal{N}}$.
Let $Pr$ be an $\mathfrak{L}$ program.
Then
\[\overline{c}(Pr)\,|\,\mathfrak{P}_{c}\stackrel{\iota}{\longrightarrow}L\]
for some $L$.
The process indicates acceptance if $Pr$ is a well defined $\mathfrak{L}$ program,
it refuses if $Pr$ violates the $\mathfrak{L}$ grammar.
The {\em implementation} of $\mathfrak{L}$ is given by an executor $\{\mathfrak{E}_{c}^{i,\jmath}\}_{i\in\textsf{N},\jmath\in\mathcal{N}^{*},c\notin\jmath}$, which is feasible by the technique developed in this paper.
For a legitimate $\mathfrak{L}$ program $Pr$ of the right type one must have that
\[\overline{c}(Pr)\,|\,\mathfrak{E}_{c}^{i,\jmath}\stackrel{\iota}{\longrightarrow}I\]
for some $I$ that implements $Pr$ in $\mathbb{PL}$.
This oversimplified account should be enough to give the reader a taste of our methodology.

Different programming paradigms are supported by different process models.
If one intends to study the object oriented features, one uses the $\pi$-calculus.
Since $\pi$ is also complete~\cite{FuYuxi}, everything carried out in this paper can be repeated for $\pi$.
It should not be difficult to internalize, as it were, Walker's meta translation of an object oriented language~\cite{Walker1991,Walker1995} in the fashion advocated here.
What we get is an implementation, rather than a translation, of the object oriented language in $\pi$.

\section{Recursion Theory in $\mathbb{VPC}$}\label{sec-S-m-n-Theorem}

In this section we explain how to do recursion theory in $\mathbb{VPC}$ by taking a look at $\mathbb{VPC}$ version of the S-m-n Theorem.
The challenge here is actually how to formulate it correctly.
We know from the recursion theory that S-m-n Theorem is about partial evaluation.
There is an effective way to transfer the index of a $(k_{0}{+}k_{1})$-ary effective function $\textsf{f}(x_{1},\ldots,x_{k_{0}},y_{1},\ldots,y_{k_{1}})$ and the inputs $i_{1},\ldots,i_{k_{0}}$ to the index of the $k_{1}$-ary effective function $\textsf{f}(i_{1},\ldots,i_{k_{0}},y_{1},\ldots,y_{k_{1}})$.
If we are ever to have a recursion theory of $\mathbb{VPC}$ processes, we must start by answering the question of what the right $\mathbb{VPC}$ counterpart of a recursive function is.
It is not hard to see that the most natural generalization of a recursive function is a parametric definition of the form
\begin{equation}\label{2011-08-31}
D(z_{1},\ldots,z_{k}) = T.
\end{equation}
For numbers $i_{1},\ldots,i_{j}$, where $j\le k$, we will write $D(i_{1},\ldots,i_{j},z_{j+1},\ldots,z_{k})$ for the $(k{-}j)$-ary parametric definition $D'(z_{j+1},\ldots,z_{k})$ given by
\[D'(z_{j+1},\ldots,z_{k}) = T\{i_{1}/z_{1},\ldots,i_{j}/z_{j}\}.\]
Suppose $i$ is the number of local names of $T$ and $\jmath$ is the list of the global names in $T$.
We say that the $D(z_{1},\ldots,z_{k})$ defined in (\ref{2011-08-31}) is a $k$-ary parametric definition of type $[i,\jmath]$.
Two $k$-ary parametric definitions, say $D_{0}(z_{1},\ldots,z_{k})$ and $D_{1}(z_{1},\ldots,z_{k})$, are equal if for all numbers $i_{1},\ldots,i_{k}$, one has $D_{0}(i_{1},\ldots,i_{k})=_{\mathbb{VPC}}D_{1}(i_{1},\ldots,i_{k})$.

By recycling the encoding in Section~\ref{sec-Godel-Index} we can G\"{o}delize the set of the $k$-ary parametric definitions of type $[i,\jmath]$.
Our technique to derive a universal process, as developed in Section~\ref{sec-Universal-Process}, helps define a universal process $\mathcal{U}^{[i,\jmath][k]}_{z}(z_{1},\ldots,z_{k})$ for the set of the $k$-ary parametric definitions of type $[i,\jmath]$.
Here the word `process' is not very precise since $\mathcal{U}^{[i,\jmath][k]}_{z}(z_{1},\ldots,z_{k})$ is a parametric definition rather than a process.
The parameter $z$ is made an index, suggesting that $\mathcal{U}^{[i,\jmath][k]}_{z}(z_{1},\ldots,z_{k})$ should be thought of as the $z$-th $k$-ary parametric definition of type $[i,\jmath]$.
The defining property of $\mathcal{U}^{[i,\jmath][k]}_{z}(z_{1},\ldots,z_{k})$ requires that for each number $j$, say the index of $D(z_{1},\ldots,z_{k})$, the equality \[\mathcal{U}^{[i,\jmath][k]}_{j}(z_{1},\ldots,z_{k}) =_{\mathbb{VPC}} D(z_{1},\ldots,z_{k})\] holds.
Now we can state the S-m-n Theorem.

\begin{thm}\label{s-m-n}
Suppose $k_{0},k_{1}$ are natural numbers.
There is a total $(k_{0}{+}1)$-ary recursive function $\textsf{s}_{k_{1}}^{k_{0}}(z,x_{1},\ldots,x_{k_{0}})$ such that for all numbers $j,i_{1},\ldots,i_{k_{0}}$ the following equality holds:
\[\mathcal{U}^{[i,\jmath][k_{0}+k_{1}]}_{j}(i_{1},\ldots,i_{k_{0}},y_{1},\ldots,y_{k_{1}}) =_{\mathbb{VPC}} \mathcal{U}^{[i,\jmath][k_{1}]}_{\textsf{s}_{k_{1}}^{k_{0}}(j,i_{1},\ldots,i_{k_{0}})}(y_{1},\ldots,y_{k_{1}}).\]
\end{thm}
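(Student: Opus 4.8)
The plan is to reproduce, at the level of the $\mathbb{VPC}$ G\"{o}del encoding of Section~\ref{sec-Godel-Index}, the purely syntactic argument behind the classical S-m-n Theorem. In the recursion-theoretic setting $\textsf{s}_{k_1}^{k_0}(z,x_1,\ldots,x_{k_0})$ is nothing but the index of the program that hardwires the constants $x_1,\ldots,x_{k_0}$ into the first $k_0$ input slots of program $z$ and then runs it. Since this is a wrapper construction that never inspects the behaviour of program $z$, it is a total effective operation. I would define $\textsf{s}_{k_1}^{k_0}$ as exactly such a transformation on the indices underlying the universal family $\mathcal{U}^{[i,\jmath][k]}_{z}$.

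Concretely, first I would fix $j$ to be the index of a $(k_0{+}k_1)$-ary parametric definition $D(z_1,\ldots,z_{k_0+k_1}) = T$ of type $[i,\jmath]$ and regard $i_1,\ldots,i_{k_0}$ as constants to be installed. Recall that every $\mathbb{VPC}$-index has the organizing shape $\langle k,\langle \llbracket T\rrbracket_{\mathfrak{d}},\llbracket T\rrbracket_{\mathfrak{p}}\rangle\rangle$, separating the subroutines $\llbracket T\rrbracket_{\mathfrak{d}}$ from the main program $\llbracket T\rrbracket_{\mathfrak{p}}$. I would decode $j$, retain $D$ together with all the parametric definitions it depends on inside the subroutine component, and place as the new main program the index of the single instantiation term $D(i_1,\ldots,i_{k_0},y_1,\ldots,y_{k_1})$; re-encoding this data yields a number $m$, and I set $\textsf{s}_{k_1}^{k_0}(j,i_1,\ldots,i_{k_0}) = m$. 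Keeping the original arity-$(k_0{+}k_1)$ copy of $D$ in the subroutine component is essential so that any recursive occurrences of $D$ inside $T$ still resolve correctly. Decoding, the constant installation, and re-encoding are manifestly number-theoretic operations that always halt, so by the completeness of $\mathbb{VPC}$ and the Church--Turing Thesis $\textsf{s}_{k_1}^{k_0}$ is a total recursive function.

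It then remains to verify the stated equality. By the defining property of the universal process the left-hand side satisfies $\mathcal{U}^{[i,\jmath][k_0+k_1]}_{j}(i_1,\ldots,i_{k_0},y_1,\ldots,y_{k_1}) =_{\mathbb{VPC}} D(i_1,\ldots,i_{k_0},y_1,\ldots,y_{k_1})$, while the same property applied to the index $m$ gives $\mathcal{U}^{[i,\jmath][k_1]}_{m}(y_1,\ldots,y_{k_1}) =_{\mathbb{VPC}} D''(y_1,\ldots,y_{k_1})$, where $D''(y_1,\ldots,y_{k_1}) = D(i_1,\ldots,i_{k_0},y_1,\ldots,y_{k_1})$ is the $m$-th definition just constructed. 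Finally $D''(y_1,\ldots,y_{k_1}) =_{\mathbb{VPC}} D(i_1,\ldots,i_{k_0},y_1,\ldots,y_{k_1})$: for any instantiating numbers the body of the former is, by the operational rule for parametric definitions, exactly the latter, so the identity relation augmented with the family of such pairs is a bisimulation that is trivially codivergent, equipollent and extensional. Transitivity of $=_{\mathbb{VPC}}$ then closes the argument.

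The routine recursion-theoretic content, namely the effectiveness and totality of $\textsf{s}_{k_1}^{k_0}$, is handled uniformly by the completeness methodology already used for Theorem~\ref{VPCeinVPC}, so I do not expect it to be the hard part. The delicate points are bookkeeping rather than conceptual: I must make sure the subroutine bundling preserves the meaning of recursive calls, which is why the arity-$(k_0{+}k_1)$ definition of $D$ must be retained, and that both indices genuinely name definitions of the common type $[i,\jmath]$. Since the wrapper body unfolds to an instantiation of $T$, whose local-name count and global-name list are inherited from $T$, the type is respected; getting these two invariants exactly right is where the care is needed.
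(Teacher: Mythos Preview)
Your proposal is correct and follows the same classical S-m-n argument the paper invokes: from an index $j$ and constants $i_{1},\ldots,i_{k_{0}}$ one effectively produces an index for the $k_{1}$-ary definition $D(i_{1},\ldots,i_{k_{0}},y_{1},\ldots,y_{k_{1}})$ and then appeals to the defining property of $\mathcal{U}^{[i,\jmath][k]}$. The only cosmetic difference is that the paper substitutes the constants directly into the body $T$ to form the new definition, whereas you keep the original $D$ as a subroutine and make the new body a single call $D(i_{1},\ldots,i_{k_{0}},y_{1},\ldots,y_{k_{1}})$; both realizations yield a total recursive $\textsf{s}_{k_{1}}^{k_{0}}$ and the same equality.
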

\begin{proof}
The proof follows the standard argument.
Given the index of a $(k_{0}{+}k_{1})$-ary parametric definition $D''(x_{1},\ldots,x_{k_{0}},y_{1},\ldots,y_{k_{1}})$ of type $[i,\jmath]$, one can effectively construct the $k_{1}$-ary parametric definition $D''(i_{1},\ldots,i_{k_{0}},y_{1},\ldots,y_{k_{1}})$ of the same type, from which we get its G\"{o}del index effectively.
This defines a total recursive function.
\end{proof}

The S-m-n Theorem helps import results in recursion theory to process theory.
The famous Rice Theorem~\cite{Rogers1987} is one such result.

\begin{thm}\label{rice}
Suppose $\mathcal{B}$ is a set of $k$-ary parametric definitions that satisfies the following:
\begin{enumerate}
\item \label{2014-09-03} $\mathcal{B}$ is not empty;
\item there is some $k$-ary parametric definition that is not in $\mathcal{B}$;
\item $\mathcal{B}$ is closed under the absolute equality.
\end{enumerate}
Then the set $\{j \mid \mathcal{U}^{[i,\jmath][k]}_{j}(x_{1},\ldots,x_{k})\in\mathcal{B}\ \mathrm{for}\ \mathrm{some}\ [i,\jmath]\}$ is undecidable.
\end{thm}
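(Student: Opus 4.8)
The plan is to internalize the classical proof of Rice's Theorem: I would reduce the halting problem to the index set by combining the S-m-n Theorem (Theorem~\ref{s-m-n}) with the universal process constructed in Section~\ref{sec-Universal-Process}. The undecidability of halting is available because $\mathbb{VPC}$ is Turing complete (Section~\ref{sec-Completeness}), so it is undecidable whether the $\mathbb{VPC}$-program coded by a number $w$, run on input $w$, ever terminates. From the hypotheses on $\mathcal{B}$ I would fix a member $D_1\in\mathcal{B}$ (condition~\ref{2014-09-03}) and a non-member $D_0\notin\mathcal{B}$ (the second hypothesis). Since the divergent definition $D_\Omega(z_1,\ldots,z_k)=\Omega$ will be the natural ``undefined'' outcome of the reduction, I would assume without loss of generality that $D_\Omega\notin\mathcal{B}$; if instead $D_\Omega\in\mathcal{B}$ the symmetric argument applies to the complement of $\mathcal{B}$, which satisfies the same three hypotheses.

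The reduction itself rests on a single master parametric definition $\textsf{M}(w,z_1,\ldots,z_k)$ that first invokes the universal process to simulate internally the computation of program $w$ on input $w$, and that proceeds as $D_1(z_1,\ldots,z_k)$ only after this test signals completion. By the completeness of $\mathbb{VPC}$ the test runs as a finite sequence of deterministic computation steps when $w$ halts, so this prefix is absorbed by $=_{\mathbb{VPC}}$ and $\textsf{M}(w,z_1,\ldots,z_k)=_{\mathbb{VPC}}D_1(z_1,\ldots,z_k)$; when $w$ fails to halt the test diverges and $\textsf{M}(w,z_1,\ldots,z_k)=_{\mathbb{VPC}}\Omega$. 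Now $\textsf{M}$ has a fixed index $m$ of some type $[i,\jmath]$, and Theorem~\ref{s-m-n} supplies a total recursive function $\textsf{s}(m,\cdot)$ with $\mathcal{U}^{[i,\jmath][k]}_{\textsf{s}(m,w)}(z_1,\ldots,z_k)=_{\mathbb{VPC}}\textsf{M}(w,z_1,\ldots,z_k)$. Hence $w$ halts iff $\mathcal{U}^{[i,\jmath][k]}_{\textsf{s}(m,w)}=_{\mathbb{VPC}}D_1\in\mathcal{B}$ and diverges iff $\mathcal{U}^{[i,\jmath][k]}_{\textsf{s}(m,w)}=_{\mathbb{VPC}}\Omega\notin\mathcal{B}$; the closure of $\mathcal{B}$ under the absolute equality turns these semantic equalities into membership facts, so $w\mapsto\textsf{s}(m,w)$ reduces the halting problem to the index set.

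The hard part will be the existential quantifier over types $[i,\jmath]$, aggravated by the convention that an ill-typed index is interpreted as $\mathbf{0}$: to certify that $\textsf{s}(m,w)$ lies \emph{outside} the index set when $w$ diverges, I must rule out that some \emph{other} type reads $\textsf{s}(m,w)$ as an index of a $\mathcal{B}$-member. The key observation that makes this tractable is that across all types only two behaviors occur, namely the divergent process $\Omega$ (for types that admit the index) and the nil fallback $\mathbf{0}$ (for types too small to admit it); both are name-free and hence invariant under the renaming that a change of type induces, so their $=_{\mathbb{VPC}}$-classes do not depend on the type. It then suffices to arrange, through the initial choice between $\mathcal{B}$ and its complement and between reducing the halting problem or its complement, that the fall-back class relevant to each branch sits on the non-member side; verifying these equalities rigorously---in particular that the simulation prefix is genuinely deterministic and that the diverging case is absolutely equal to $\Omega$---is where the completeness of $\mathbb{VPC}$ is used, exactly as in the proof of Theorem~\ref{VPCeinVPC}.
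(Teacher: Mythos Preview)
Your proposal is correct and follows essentially the same route as the paper's proof: assume without loss of generality that the divergent definition is outside $\mathcal{B}$, pick a member $D\in\mathcal{B}$, build a $(k{+}1)$-ary master definition that guards $D$ by a test for membership in a recursively enumerable but non-recursive set, and then apply Theorem~\ref{s-m-n} to obtain a total recursive reduction. The paper phrases the undecidable test as ``$z\in W$'' for a diagonal domain set $W$, while you phrase it as the self-halting of program $w$ on input $w$; these are the same device, and the surrounding use of the \textit{if}\;$\ldots$\;\textit{then} construct from Section~\ref{sec-Completeness} is identical.

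One difference worth noting: you explicitly flag the existential quantifier over types $[i,\jmath]$ in the index set and worry about the $\mathbf{0}$ fallback produced by the parser at types too small to admit the index $\textsf{s}(m,w)$. The paper's proof simply does not address this point; it argues only at the fixed type $[i,\jmath]$ of the master definition and concludes. Your attempt to handle it via a symmetry argument (swapping $\mathcal{B}$ with its complement and the halting problem with its complement) is in the right spirit, but as written it does not obviously cover the mixed case where exactly one of $\Omega$ and $\mathbf{0}$ lies in $\mathcal{B}$. Since the paper itself elides the issue, your proof sketch is already at least as rigorous as the published argument; if you want to close the gap fully, the cleanest fix is to observe that the renamed variants of the diverging master process are all $=_{\mathbb{VPC}}\Omega$ (the guard uses only local channels), so that across all accepting types the behaviour is $\Omega$, and then to handle the $\mathbf{0}$ fallback by a separate case split on whether the constant-$\mathbf{0}$ definition is in $\mathcal{B}$.
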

\begin{proof}
Without loss of generality, suppose $\Omega\notin\mathcal{B}$.
By condition (\ref{2014-09-03}) the set $\mathcal{B}$ contains some $k$-ary parametric definition $D(x_{1},\ldots,x_{k})$.
Let $W$ contain the number $\langle i_{1},\ldots,i_{k}\rangle$ if the $\langle i_{1},\ldots,i_{k}\rangle$-th $k$-ary recursive function is definable at $i_{1},\ldots,i_{k}$.
It is well known that $W$ is recursive enumerable but not decidable.
Let the $(k{+}1)$-ary parametric definition $D'(z,x_{1},\ldots,x_{k})$ be defined by
\begin{eqnarray*}
D'(z,x_{1},\ldots,x_{k}) &=& \textit{if}\;z\in W\;\textit{then}\;D(x_{1},\ldots,x_{k}).
\end{eqnarray*}
Suppose $D'(z,x_{1},\ldots,x_{k})$ is of type $[i,\jmath]$.
Then some number $j$ exists such that
\[\mathcal{U}^{[i,\jmath][k+1]}_{j}(z,x_{1},\ldots,x_{k}) =_{\mathbb{VPC}} D'(z,x_{1},\ldots,x_{k}).\]
According to Theorem~\ref{s-m-n} some binary total recursive function $\textsf{s}$ exists such that
\[\mathcal{U}^{[i,\jmath][k]}_{\textsf{s}(j,z)}(x_{1},\ldots,x_{k}) =_{\mathbb{VPC}} \mathcal{U}^{[i,\jmath][k+1]}_{j}(z,x_{1},\ldots,x_{k}).\]
Using (3) it is clear that $k\in W$ if and only if $D'(k,x_{1},\ldots,x_{k})=_{\mathbb{VPC}}D(x_{1},\ldots,x_{k})$ if and only if $\mathcal{U}^{[i,\jmath][k]}_{\textsf{s}(j,k)}(x_{1},\ldots,x_{k})\in\mathcal{B}$.
So $\mathcal{B}$ cannot be decidable.
\end{proof}

There is nothing new about the above proof.
But at least it demonstrates that the type constraint $[i,\jmath]$ of $\mathcal{U}^{[i,\jmath][k]}$ is not much of a restriction.

A simple consequence of the Rice Theorem is about the unobservable processes.

\begin{cor}
The set of the unobservable processes is undecidable.
\end{cor}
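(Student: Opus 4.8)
The plan is to obtain this as an immediate instance of the Rice Theorem (Theorem~\ref{rice}), taking $k=0$ and reading a $\mathbb{VPC}$-process as a $0$-ary parametric definition $D()=T$. First I would let $\mathcal{B}$ be the collection of all $0$-ary parametric definitions $P$ that are unobservable, that is, $P{\not\Downarrow}$. The whole task then reduces to checking that this $\mathcal{B}$ meets the three hypotheses of Theorem~\ref{rice}, after which the theorem delivers the undecidability for free.

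The first two hypotheses are routine. For nonemptiness, ${\bf 0}$ performs no action whatsoever, so ${\bf 0}{\not\Downarrow}$ and ${\bf 0}\in\mathcal{B}$; the divergent process $\Omega$, whose only transition is $\Omega\stackrel{\tau}{\longrightarrow}\Omega$, is a second witness. For properness, $\overline{a}(0)$ can fire the observable action $\overline{a}(0)$, so $\overline{a}(0){\Downarrow}$ and $\overline{a}(0)\notin\mathcal{B}$. The crux is closure under the absolute equality, and here I would invoke the equipollence clause built into Definition~\ref{absolute-equality}: since $=_{\mathbb{VPC}}$ is equipollent, $P=_{\mathbb{VPC}}Q$ forces $P{\Downarrow}\Leftrightarrow Q{\Downarrow}$. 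Hence if $P\in\mathcal{B}$ and $P=_{\mathbb{VPC}}Q$ then $Q{\not\Downarrow}$, so $Q\in\mathcal{B}$. This is precisely condition~(3) of Theorem~\ref{rice}, and it is the only place where the semantic content of `unobservable' enters the argument.

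With all three hypotheses in hand, Theorem~\ref{rice} yields that the set $\{j \mid \mathcal{U}^{[i,\jmath][0]}_{j}\in\mathcal{B}\ \text{for some}\ [i,\jmath]\}$ is undecidable, and this set is exactly the set of G\"{o}del indices of the unobservable processes, since every process is captured, up to $=_{\mathbb{VPC}}$, by $\mathcal{U}^{[i,\jmath][0]}_{j}$ for a suitable index $j$ and type $[i,\jmath]$, with membership in $\mathcal{B}$ tracking unobservability. I expect no genuine obstacle: the only points requiring care are reading `the set of unobservable processes' as the set of their indices, so that decidability even makes sense, and recognizing that equipollence is exactly the semantic invariance guaranteeing closure under $=_{\mathbb{VPC}}$.
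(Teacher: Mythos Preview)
Your proposal is correct and is precisely the argument the paper has in mind: the paper presents the corollary as ``a simple consequence of the Rice Theorem'' without spelling out a proof, and your verification of the three hypotheses of Theorem~\ref{rice} (nonemptiness via ${\bf 0}$ or $\Omega$, properness via $\overline{a}(0)$, closure under $=_{\mathbb{VPC}}$ via equipollence) is exactly the intended instantiation with $k=0$.
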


\section{Future Work}\label{sec-Future-Work}

The idea of designing universal processes was discussed in~\cite{AndersenMorkSorensen1997} in the framework of CCS.
Due to the limitation of the model, a universal process for CCS is static in the sense that it must preload the G\"{o}del number of the CCS process to be simulated since it can never dynamically input any number.
In order to simulate the branching structure of a process, the universal process for CCS must introduce divergence.
A recent work is reported in~\cite{BaetenLuttikTilburg2011}, where the authors studied universal Reactive Turing Machines.
A universal Reactive Turing Machine either introduces divergence, or places restriction on the maximum branching degree of the Reactive Turing Machine being simulated, both divergence and branching degree being of semantic nature.
A universal machine for Reactive Turing Machines is also static in the above sense since the transmission of the description of a machine, which is a string of symbols, to the universal machine should not be interrupted.
The advantage of our universal process is that it is dynamic and does not impose any {\em semantic} constraints on any processes to be simulated.

We have chosen to study universal processes of $\mathbb{VPC}$ since natural numbers are familiar and there is a handy decidable Presburger theory.
The model introduced in Section~\ref{sec-Programming-Paradigm} is built on another value domain.
To achieve completeness very little is required from value domain.
Completeness has more to do with the communication mechanism of a model than with the value domain of the model.
The constructions of the universal processes in other complete models might look quite different.
In $\pi$-calculus for example the numbers can be coded up using the following inductively defined name indexed functions:
\begin{eqnarray*}
\llbracket 0\rrbracket_{a} &=& \overline{a}(c).\overline{c}(b).\overline{c}(e).\overline{e}e, \\
\llbracket i{+}1\rrbracket_{a} &=& \overline{a}(c).\overline{c}(b).\overline{c}(e).\llbracket i\rrbracket_{b}.
\end{eqnarray*}
It is more or less a formality to apply our approach to generate a universal process of $\pi$.

For a complete model the existence of universal process depends on the operators of the model.
Since interpretation is done on-the-fly, problem may arise if an operator is not congruent over the absolute equality.
The best counter example is given by the unguarded choice.
Upon receiving the G\"{o}del number $k$ for $P'+P''$, an interpreter $\mathcal{I}$ is expected to generate $\mathcal{I}(k')+\mathcal{I}(k'')$, where $k'$ codes up $P'$ and $k''$ codes up $P''$.
But this would be disastrous because $\mathcal{I}(k')$ would normally engage in some internal actions before the real simulation of $P$ happens.
In other words, it can preempt the choice.
Guarded choice would not have this problem.

Is there a reasonable (complete) model that does not have any universal process?
How strong is it to formulate a thesis postulating that all complete models have universal processes?
This is an important issue to be studied.

The idea of universal process can be further exploited.
Two important directions are outlined below.
\begin{enumerate}
\item
At the theoretical level, it is interesting to see how recursion theory~\cite{Rogers1987,Soare1987} defined in $\mathbb{VPC}$ can help develop an interactability theory of the model.
Interactability theory aims to study definability of interactive behaviours, rather than computational behaviours, in interactive models.
\item
At the programming level, it is worth the effort to study programming theory in a systematic way.
More generally we can look at the class of process calculi with universal processes.
These are models well equipped to model programming features.
The significance of these models to programming theory is yet to be investigated.
\end{enumerate}

\noindent Is it possible for a universal process to be a single process rather than a family of processes?
A drastic approach to address the issue is to introduce a bijective naming function $\nu_{(\_)}:\textsf{N}\rightarrow\mathcal{N}$ that a $\mathbb{VPC}$ process may make use of. The function $\nu_{(\_)}$ gives an enumeration $\nu_{0},\nu_{1},\ldots$ of the names.
It can be extended to a function from $\textsf{T}$ to the set $\{\nu_{t} \mid t\in\textsf{T}\}$ of name expressions.
Now the grammar of $\mathbb{VPC}_{\nu}$ can be defined by the following BNF:
\begin{eqnarray*}
T &:=& {\bf 0} \mid \nu_{t}(x).T \mid \overline{\nu_{t}}(t).T \mid T\,|\,T \mid (\nu_{i})T \mid \textit{if}\;\psi\;\textit{then}\;T \mid D(t_{1},\ldots,t_{k}).
\end{eqnarray*}
An example of a $\mathbb{VPC}_{\nu}$ process is
\[\nu_{0}(x).\nu_{1}(y).\textit{if}\;x=2y\;\textit{then}\;\overline{\nu_{x}}(y).\]
It is clear from this example that $\mathbb{VPC}_{\nu}$ admits mobile computing to a certain degree.
Notice that the match operator $[\nu_{t}{=}\nu_{t'}]T$ is definable in $\mathbb{VPC}_{\nu}$.
The new model lacks the power, and the trouble as well, introduced by relocating local names.
The virtue of $\mathbb{VPC}_{\nu}$ is that it has a single universal process $\mathcal{U}_{c}$ that is capable of dealing with indices of all $\mathbb{VPC}_{\nu}$ processes.
This is rendered possible by the naming function which produces a canonical indexing for all the names whatsoever.
Further study on $\mathbb{VPC}_{\nu}$ is necessary before we can say more about its theoretical and practical relevance to process theory.

\section*{Acknowledgement}

This work has been supported by NSFC (61472239, PACE 61261130589).
Xiaojuan Cai's idea about the on-the-fly simulations has been very instructive to this work.
Sandy Harris and Huan Long have helped in improving the quality of this paper.


\end{document}